\documentclass[journal,10pt]{IEEEtran}
\usepackage{graphicx}
\usepackage{comment}
\usepackage{amssymb}
\usepackage [noadjust]{cite} 
\usepackage{bm}  
\usepackage{amsmath}
\usepackage{amssymb}
\usepackage{enumerate}
\usepackage{stfloats}
\usepackage{cases}
\usepackage{epstopdf}
\usepackage{soul,color} 
\usepackage[percent]{overpic}
\usepackage{tabularx}
\usepackage{subfigure}
\usepackage[table]{xcolor}
\usepackage{multirow}
\usepackage{booktabs}  
\usepackage{tabu} 
\usepackage{soul}
\usepackage{algorithm}
\usepackage{algorithmic}
\usepackage{amsthm}

\newtheorem{lemma}{Lemma}
\newtheorem{corollary}{Corollary}


 \newcommand{\algstartblock}
 {	
 	\ifnum\value{algcounter}<10{
 		\addtocounter{algblockspace}{12}
 	}\else{
 		\addtocounter{algblockspace}{12}
 	}
 	\fi
 }
 \newcommand{\algendblock}{
 	\ifnum\value{algcounter}<10{
 		\addtocounter{algblockspace}{-12}
 	}\else{
 		\addtocounter{algblockspace}{-12}
 	}
 	\fi
 }

 \newcommand{\alg}[1]{
	\ifnum\value{algcounter}<10{
	 	\begin{enumerate}[\arabic{algcounter}.\hspace{\numexpr\value{algblockspace}+0 pt}]
	 		\setlength\itemindent{-7pt}
			\item \hspace{-8pt} #1
		\end{enumerate} \stepcounter{algcounter}
	}\else{
		\begin{enumerate}[\arabic{algcounter}.\hspace{\numexpr\value{algblockspace}+0 pt}]
			\setlength\itemindent{-11pt}
			\item \hspace{-8pt} #1
		\end{enumerate} \stepcounter{algcounter}
	}
	\fi
  }

\usepackage[utf8]{inputenc}
\usepackage[english]{babel}

\usepackage{amsthm}

 \newcommand{\Rdddirectone}[3]{R^{dir_1}}
 \newcommand{\Rdddirectzero}[2]{R^{dir_0}}

 \newcommand{\Rddrelay}[2]{R^{rel}}

 \newcommand{\Rcellularzero}[2]{R^{cel_0}}
 \newcommand{\Rcellularone}[3]{R^{cel_1}}

\begin{document}
\raggedbottom
\allowdisplaybreaks
\title{Mobility-Aware Performance in Hybrid  RF and Terahertz Wireless Networks}
\author{Md Tanvir Hossan, {\em  Member IEEE} and~Hina~Tabassum, {\em Senior Member IEEE}
     \thanks{Md Tanvir Hossan and H. Tabassum are with the Department of Electrical Engineering and Computer Science,  York University, (e-mail: mthossan@ieee.org, hinat@yorku.ca). This work is supported by the Discovery Grant from the Natural Sciences and Engineering Research Council of Canada.}}
\date{}
\maketitle

\begin{abstract}

Using tools from stochastic geometry, this paper develops a tractable framework to analyze the performance of a mobile user in a two-tier wireless  network operating on sub-6GHz  and terahertz (THz) transmission frequencies. Specifically,  using  an equivalence distance approach, we characterize the overall handoff (HO) probability  in terms of the horizontal and vertical HO and mobility-aware coverage probability.  In addition, we characterize novel coverage probability expressions for THz network in the presence of molecular absorption noise and highlight its significant impact on the users' performance. Specifically, we derive a novel closed-form expression for the Laplace Transform of the cumulative molecular noise and interference observed by a mobile user in a hybrid RF-THz network. Furthermore, we provide a novel approximation to derive the conditional distance distributions of a typical user in a hybrid RF-THz network. Finally, using the overall HO probability  and coverage probability expressions, the  mobility-aware probability of coverage has been derived  in a  hybrid RF-THz network. Our mathematical results validate the correctness of the derived expressions using Monte-Carlo simulations. The results offer insights into the adverse impact of users' mobility and molecular noise in THz transmissions on the probability of coverage of  mobile users. Our results demonstrate that a small increase in the intensity of terahertz base-stations (TBSs) (about 5 times) can increase the HO probability much more compared to the case when the intensity of RF BSs (RBSs) is increased by 100 times. Furthermore, we note that high molecular absorption can be beneficial (in terms of minimizing interference and molecular noise)  for specific deployment intensity  of TBSs and the benefits can outweigh the drawbacks of signal degradation due to molecular absorption.

\end{abstract}

\begin{IEEEkeywords}
Terahertz, horizontal and vertical handoff, molecular absorption noise, mobility, user association, hand-off probability, coverage probability, stochastic geometry.
\end{IEEEkeywords}
\raggedbottom

\section{Introduction}
Connected and autonomous vehicles (CAVs) are becoming crucial nowadays to  improve the driving safety, ameliorate travel efficiency through efficient parking and routing, and minimize traffic congestion. In this context, ultra-reliable  and low latency communication (URLLC) is necessary to enable the exchange of  real-time information between vehicles, and vehicle to infrastructure; thereby enabling vehicles (or drivers) to make informed decisions. However, unfortunately, while the conventional sub-6GHz network benefits from strong transmission powers and wider coverage zones, it may not guarantee URLLC due to extremely limited and congested spectrum. In the sequel, transmissions at millimeter-wave (mmWave) ($\sim$ 30 - 100GHz) and terahertz (THz) ($\sim$ 0.1 - 10 THz) frequencies  will complement traditional wireless transmissions at sub-6GHz (or radio frequency (RF))  to support ubiquitous vehicular communications. 

To date, the THz spectrum which lies in between the mmWave and the optical spectrum has been investigated rarely. However, with  the recent innovations in THz signal generation,  radiation, and modulation methods,  the so-called THz gap is closing. THz spectrum can support  massive data rates in the order of hundreds of  Gigabits-per-second (Gbps), massive connectivity,  and extremely secure transmissions. Nevertheless, THz channel propagation is susceptible to unique challenges such as
\textcolor{black}{{molecular absorption noise\footnote{A part of electromagnetic energy gets transformed into the internal energy of molecules, referred to as molecular absorption noise which is a function of frequency.}, {varying molecular absorption coefficients} at different frequencies, and a sophisticated {Beer's Lambert law-based channel propagation}}} model. 


\textcolor{black}{While THz transmissions suffer from unfavorable propagation and atmospheric absorption; there are several reasons to explore THz bands for mobility-based applications as also noted in \cite{7982949,url}, i.e., \textbf{(i)} even if users are mobile, very high data rate transmission links become nearly static from the data viewpoint, i.e., the transmissions become almost “instantaneous.” In other words, although users' channel characteristics can vary over time, the variations happen at a much slower rate than the actual data rate transmission  \cite{7982949}, \textbf{(ii)} Even with the intermittent connectivity of a mobile user (e.g., a vehicle connecting to nearby access points), the amount of data that can be transmitted per connection is huge (i.e., 1 trillion bits in 1 second) \cite{7982949,url}. Thus, with faster communication, it is not necessary to be connected all the time. As long as the high-speed connection is available every now and then, the users can transfer or request all the data \cite{url}, \textbf{(iii)} by moving to higher carrier frequencies, the impact
of Doppler effect can be minimized which is crucial for 
transmissions to trains/aircrafts moving at high
speeds.}

\subsection{Background Work}
To date, a variety of research papers analyzed the coverage performance  considering a stand-alone THz network \cite{Joonaskokkoniemi2017stochastic,8763780, 8849958,9148716}.  In  \cite{Joonaskokkoniemi2017stochastic}, 
the authors characterized the average interference in a stand-alone THz network by applying the methods from the stochastic geometry assuming an interference-limited regime. However, the average interference expression was not applied to the coverage or outage analysis of a typical user. Instead, the interference distribution was approximated with a log-logistic distribution to compute the coverage probability. The authors highlighted that the application of the log-logistic approximation may not always be {precise}. In \cite{8763780}, the authors considered a stand-alone THz network to calculate the end-to-end latency and reliability, while assuming a Gaussian distribution of the interference. Likewise, in \cite{8849958}, the interference was approximated with the average interference.  

The aforementioned research works {examined} the stand-alone THz networks performance.  Recently, a mixed THz and RF decode-and-forward relaying system was studied  in \cite{8932597}. The authors derived the cumulative density function (CDF) of the receiver's end-to-end (E2E) signal-to-noise ratio (SNR), the outage probability, and symbol error rate (SER).
The authors in \cite{ntontin2016toward}  derived the approximate coverage probability  for a single-tier network, where {RBS or TBS can be used in} an opportunistic manner. However, given the small coverage of THz transmissions, it is practical to consider a two-tier network with a separate deployment of TBSs which is likely much denser than the deployment of RBSs. Different from the existing research,  \cite{9119462}  characterized the exact coverage probability and interference statistics of users in a {stand-alone} THz network and a hybrid two-tier RF-THz network {with the help of} stochastic geometry. 

\textcolor{black}{
None of the research works analyzed the impact of mobility on the performance of THz networks or multi-band networks. Also, the  impact of molecular absorption noise  was not considered in the stochastic-geometry based coverage analysis.}

\textcolor{black}{To date, several interesting research works have considered the impact of mobility in RF \cite{7006787, 7866856, 9023398}, or mm-wave networks \cite{mm1,mm2,mm3,mm4,1702.02775}.   In \cite{7006787}, the HO probability analysis was conducted in a multi-tier cellular network. The authors showed that there is an impact of users' mobility on tier association and coverage probability. Nevertheless, the framework in \cite{7006787} only deals with the horizontal HO (i.e., the HO between the BSs in the same tier). This shortcoming arises because the closest BS to the user after HO is always considered as the new serving BS, which is not true in multi-tier networks with BSs having distinct powers, coverage zones, and operating frequencies. To overcome this shortcoming, in \cite{7866856, 9023398}, the authors applied an equivalence-based approach to analyze both the vertical and horizontal HO probabilities in a two-tier RF network.  
Nevertheless, to attain high speed connectivity and URLLC in 6G, it is imperative to understand the impact of mobility in multi-band wireless networks. Interestingly, in \cite{1702.02775}, the authors introduce a software defined network (SDN) switching framework for vehicles equipped with transceivers capable of dynamically switching between THz and mmWave bands to accommodate asymmetric uplink/downlink communication.  }

\subsection{Motivation and Contributions}
\textcolor{black}{None of the research works presented a systematic stochastic geometry framework where the \textit{equivalent distance approach} has been applied to characterize the horizontal handoff  and vertical handoff  probabilities (or rate)\footnote{There are two  definitions of the \textit{HO rate} that exist in the literature. In \cite{6477064}, the HO rate is  the ratio of the average number of cells a mobile user traverses to the average transition time (including the pause time). In \cite{7006787}, the HO rate is defined as the probability that the user crosses over to the next cell in one movement period.  In this paper, we use the definition in  \cite{7006787}.}, and mobility-aware coverage probability  of a user in a two-tier \textit{multi-band network operating on different frequencies}. 
Furthermore, characterizing  the aforementioned performance metrics in the presence of THz transmissions brings additional novelty due to the unique features of THz that are \textit{different from conventional RF and mm-wave}, such as \textbf{(i}) \textit{molecular absorption noise} in the SINR expression,  and a \textbf{(ii})  sophisticated \textit{Beer's Lambert law-based channel propagation} model. }
To this end, the contributions of this paper are:



\begin{itemize}
    \item  We characterize  the overall HO probability  (which is based on the vertical and horizontal HO probability) of a mobile user in the downlink of a hybrid RF-THz network, considering the maximum received signal power association criterion. In this context, 
    we apply  an \textit{equivalence distance approach} to facilitate the analysis of vertical HO, i.e., by introducing a virtual tier with the serving tier of a mobile user.  In addition,  we pointed out that a correction factor is missing in the HO probability expressions of all aforementioned research works (whether single-tier or multi-tier) related to mobility {\cite{7006787, 7866856, 9023398}}. 
    \item We analyze the exact coverage probability of a typical user in the THz networks considering the repercussions of molecular noise absorption and highlight the devastating impact of ignoring the molecular absorption noise on the coverage probability. Specifically, we derive a new closed-form expression for the Laplace Transform (LT) of the cumulative molecular noise and interference observed by a typical user in THz network.
    \item We provide a novel approximation to derive the conditional distance distribution of a typical user in a hybrid  network. To tackle mathematically challenging  Beer's Lambert transmission model,  we propose a novel and efficient approximation to make the framework tractable.
    \item Using the overall HO probability  and coverage probability, we derive the  \textit{mobility-aware probability of coverage} of a mobile user in a  hybrid RF-THz network. 
    \item \textcolor{black}{We provide an  asymptotic closed-form expression of association probability for low molecular absorption coefficient, and asymptotic single-integral  expression of no HO probability when the users move in a straight line.}
   \item  Numerical results validate  the accuracy of our derived expressions.  \textcolor{black}{The derived expressions can be computed numerically using standard mathematical software such as \texttt{MAPLE} and \texttt{Mathematica} to obtain useful insights related to the user's performance in a hybrid RF/THz network with mobility and molecular absorption noise.}
\end{itemize}

\begin{table*}[t]
\centering
\tiny
\caption{{\color{black} List of Notations} }
\resizebox{\textwidth}{!}{\begin{tabular}{|c|c|c|c|}
\hline
  \textbf{Symbol}&\textbf{Description} & \textbf{Symbol}&\textbf{Description}\\ 
  \hline
  \footnotesize $\mathbf{\Phi}_{R}$ & \footnotesize Locations  of  the  conventional RBSs  & \footnotesize  $R_{\mathrm{th}}$ & \footnotesize Desired target rate \\ 
  \hline
  \footnotesize $\mathbf{\Phi}_{T}$ & \footnotesize Locations  of  the  conventional TBSs  & \footnotesize $\theta$ & \footnotesize Boresight direction angle \\ 
  \hline
  \footnotesize $\mathbf{\Phi}_{u}$ & \footnotesize  Locations  of  the  user & \footnotesize $w_{q}$ & \footnotesize Main lobe beam-width  \\ 
  \hline
  \footnotesize $\lambda_{R}$ & \footnotesize Intensity of RBS  & \footnotesize $G^q_{\mathrm{max}}$ & \footnotesize Beamforming gains of main lobes  \\ 
  \hline
  \footnotesize $\lambda_{T}$ & \footnotesize Intensity of TBS  & \footnotesize $G^q_{\mathrm{min}}$ & \footnotesize Beamforming  gains of side lobes \\ 
  \hline
  \footnotesize $\lambda_{u}$ & \footnotesize Intensity of user  & \footnotesize $N_T$ & \footnotesize Noise originates from thermal and molecular absorption \\ 
  \hline
  \footnotesize $P_{R}^{\mathrm{tx}}$ & \footnotesize Transmit power  of  the  RBSs  & \footnotesize $N_{0}$ & \footnotesize Thermal noise \\ 
  \hline
  \footnotesize $G_{R}^{\mathrm{tx}}$ & \footnotesize Transmitting  antenna  gain RF & \footnotesize $r_{0}$ & \footnotesize Distance between the mobile user to the serving  RBS \\ 
  \hline
  \footnotesize $G_{R}^{\mathrm{rx}}$ & \footnotesize Receiving antenna gain of RF  & \footnotesize  $r_{i}$ & \footnotesize Distance between  the \textit{i}-th  interfering  RBS  and user \\ 
  \hline
  \footnotesize $c$ & \footnotesize Speed of the electromagnetic  wave  & \footnotesize $r_T$ & \footnotesize Distance  between  initial location of  the user  and  TBS \\ 
  \hline
  \footnotesize $f_{R}$ & \footnotesize RF  carrier frequency (in GHz)  & \footnotesize $R_T$ & \footnotesize Distance  between  final location of  the user  and  TBS \\ 
  \hline
  \footnotesize $\alpha$ & \footnotesize Path-loss exponent for the RF signal  & \footnotesize $r_R$ & \footnotesize Distance between initial location of the user and RBS  \\ 
  \hline
  \footnotesize $H$ & \footnotesize Fading channel power  & \footnotesize $R_R$ & \footnotesize Distance between final location of the user and RBS \\ 
  \hline
  \footnotesize $N_R$ & \footnotesize Thermal noise   & \footnotesize  $r_{T}^{\prime}$ & \footnotesize Equivalent distance of $r_{R}$ \\ 
  \hline
  \footnotesize $I_{R}$ & \footnotesize cumulative interference  from interfering RBSs & \footnotesize  $R_{T}^{\prime}$ & \footnotesize Equivalent distance of $R_{R}$ \\ 
  \hline
  \footnotesize $H_{i}$ & \footnotesize Fading from the \textit{i}-th interfering RBS   & \footnotesize $r_{R}^{\prime}$ & \footnotesize Equivalent  distance  of $r_T$ \\ 
  \hline
  \footnotesize $P_{T}^{\mathrm{tx}}$ & \footnotesize  Transmit  power  of  the  TBSs & \footnotesize $R_{R}^{\prime}$ & \footnotesize Equivalent  distance  of $R_T$ \\ 
  \hline
  \footnotesize $G_{T}^{\mathrm{tx}}$ & \footnotesize Transmitting  antenna  gain  of  TBSs  & \footnotesize $A_R$ & \footnotesize Association probability of a user with RBS \\ 
  \hline
  \footnotesize $G_{T}^{\mathrm{rx}}$ & \footnotesize Receiving  antenna  gain of  TBSs & \footnotesize $A_T$ & \footnotesize Association probability of a user with TBS \\ 
  \hline
  \footnotesize $f_{T}$ & \footnotesize Carrier  frequency  in  THz  & \footnotesize $\mathbb{P}(H)$ & \footnotesize Overall HO probability of the typical user \\ 
  \hline
  \footnotesize  $K_{a}(f_T)$ & \footnotesize Molecular absorption coefficient  & \footnotesize $v$ & \footnotesize Typical user velocity\\
  \hline
  \hline
\end{tabular}}
\label{Notation_Summary_mmwave}
\end{table*}
\normalsize

The rest of the paper is organized as follows.  Section II presents the system model, assumptions, and  the methodology of analysis. The horizontal and vertical HO probability analysis is presented  in Section III. Section IV characterizes the coverage probability of a user in the presence of molecular absorption noise and incorporates  the impact of horizontal and vertical HO probability in the calculation of the mobility-aware coverage probability. Finally, selected numerical and simulation results are presented in Section V before  conclusions in Section~VI.

\section{Network Model and Assumptions}
In this section, we present  the spatial network deployment  model, channel propagation models, and mobility/HO model of a typical mobile user in a  {multi-band}  network. Finally, we present the step-by-step methodology of analyzing the mobility-aware coverage probability.

\subsection{Spatial Network Deployment}
A two-tier downlink network comprised of a layer of RF BSs (RBSs) and a layer of THz BSs (TBSs) is considered. The spatial deployment of the  RBSs and TBSs is taken as a two-dimensional (2D) homogeneous Poisson Point Processes (PPP)  $\mathbf{\Phi}_{R}$ and $\mathbf{\Phi}_{T}$ with intensities  $\lambda_{R}$ and $\lambda_{T}$, respectively.  \textcolor{black}{We evaluate the performance of a mobile user  who is originally found at the origin and measures  the channel quality from RBSs and TBSs as is done in the existing heterogeneous networks (HetNets). Users then handoff opportunistically  to the RBS or TBS  and the BS serves various users in orthogonal channels or time slots. The mobile user connects to a given BS based on maximum received signal power.} An illustration of the considered network is shown in Fig.~1 where the typical mobile user can be classified according to its velocity, e.g., pedestrians with low velocity and vehicles with moderate to high velocity.

\begin{figure*}\label{sys_mod}
\centering
\includegraphics[width=6.5in]{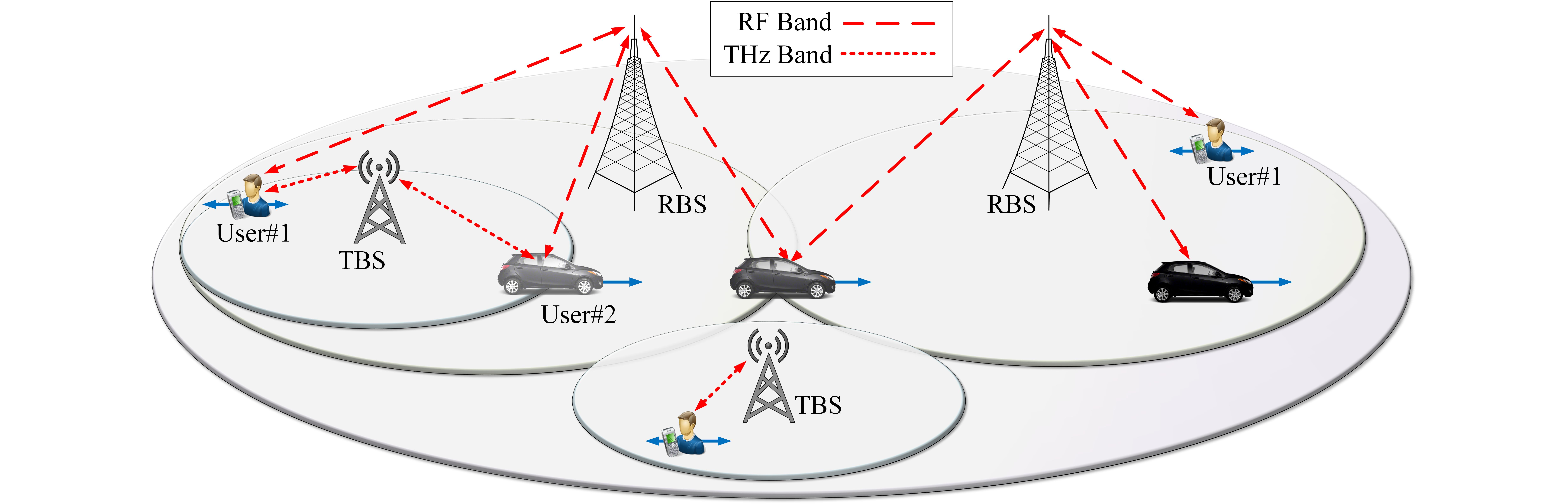}
\caption{{Graphical illustration of a two-tier hybrid RF-THz network with low- and high-velocity users.}}
\label{u3}
\end{figure*}
\subsection{Mobility Model and HO Criterion}
The typical mobile user moves with a velocity $v$ from the origin in an arbitrary direction, thereby HO may occur depending on the maximum received signal power criterion.  HOs (or association of users) can be performed based on both  the instantaneous
received power \cite{dhillon2011tractable,6171996} and maximum long-term averaged received power  \cite{jo2012heterogeneous, ye2013user,rubio2017user,ye2016user,zappone2018user}.  However, the short-term instantaneous fading can yield  unnecessary HOs, that is, the ``ping-pong effect``. To overcome this undesired phenomenon, the received signal power is averaged over the measurement period  in long-term evolution (LTE). This assumption, also has been considered in other research works \cite{ye2013user,rubio2017user,ye2016user,zappone2018user} and is considered as more realistic compared to instantaneous received power based user association \cite[page1]{zappone2018user}. 
The HOs in the same tier (e.g., RBS-RBS or TBS-TBS) are referred to as \textit{horizontal HO}. Alternatively, when the type of user switches its BSs in two different tiers (e.g., RBS-TBS or TBS-RBS), then this HO is referred to as \textit{vertical HO}.

\subsection{RF and THz Communication Model}
\subsubsection{RF Model}
The  signal transmitted from RBS incurs path-loss and short-term fading which is Rayleigh distributed. At the typical mobile user, the received signal power is defined as:
\begin{equation}\label{RF_dis}
  P_{R}^{\mathrm{rx}} = G_{R}^{\mathrm{tx}}\:G_{R}^{\mathrm{rx}}\:\left( \frac{c}{4\pi f_{R}} \right)^{2} \frac{P_{R}^{\mathrm{tx}}}{r_{0}^{\alpha}},
\end{equation}
The signal-to-interference-plus noise ratio (SINR) of a typical mobile user on RF transmission channel is thus modeled as:
\begin{equation}{\label{sinr_rf}}
    \mathrm{SINR}_{\mathrm{R}} = \frac{P_{R}^{\mathrm{tx}}\:G_{R}^{\mathrm{tx}}\:G_{R}^{\mathrm{rx}} \left(\frac{c}{4\pi f_{R}} \right)^2 H }{r_{0}^{\alpha}\left(N_{R} + I_{R}\right)}=  \frac{P_{R}^{\mathrm{tx}} \gamma_R H }{r_{0}^{\alpha}\left(N_{R} + I_{R}\right)},
\end{equation}
where  $P_{R}^{\mathrm{tx}}, G_{R}^{\mathrm{tx}}, G_{R}^{\mathrm{rx}}, c, f_{R}, r_{0},$ and $\alpha$ denote the transmit power from the RBSs, transmitting antenna gain, receiving antenna gain, speed of the electromagnetic wave, RF carrier frequency (in GHz), distance between the mobile user to the serving RBS, and path-loss exponent of the signal, respectively. 
Also, $H$ is the exponentially distributed channel fading power of the mobile user from the targeted RBS,  $N_R$ is the power of thermal noise  at the receiver, $I_{R} = \sum_{i\in \Phi_{R}\backslash 0} P_{R}^{\mathrm{tx}} \gamma_{R} r_{i}^{-\alpha} H_{i}$  is the cumulative interference at the mobile user from the interfering RBSs. From the cumulative interference, $r_{i}$ is the distance between the $i$-th interfering RBS and the typical mobile user, $H_{i}$ is the power of fading from the $i$-th interfering RBS to the typical mobile user, and $ \gamma_R = G_{R}^{\mathrm{tx}} \: G_{R}^{\mathrm{rx}} \left( {c}/{4\pi f_{R}} \right)^2 $. 

\subsubsection{THz Model}
In THz network, the  line-of-sight (LOS) transmissions are  much more significant than the non-line-of-sight (NLOS) transmissions due to the presence of molecular absorption. Subsequently, in this work, following  \cite{Joonaskokkoniemi2017stochastic,7982949,8763780}, we calculate the received power taking into account the LOS transmission property from \cite{7820226}, \cite{9119462} as follows\footnote{\textcolor{black}{
THz transmissions  are   prone to the the molecular absorption in the (indoor/outdoor) atmosphere. This absorption process can be described with the help of \textit{Beer-Lambert’s law} which states that the amount of radiation  that is able to propagate from a transmitter to the receiver through the absorbing medium can be characterized  by
$\mathrm{exp}(-K_{a}(f_T)\:d_{0})$, where $K_a(f_T)$ denotes the molecular absorption coefficient of the indoor or outdoor medium \cite{ma2018invited, federici2016review}.  The model is shown to  be applicable to both indoor and outdoor scenarios \cite{ma2018invited, federici2016review}.   Our contributions in this paper are general and are applicable for any values of $K_a(f_T)$. 
}}:
\begin{equation}\label{THz_dis}
   P_{T}^{\mathrm{rx}} =  G_{T}^{\mathrm{tx}}\:G_{T}^{\mathrm{rx}} \left( \frac{c}{4\pi f_{T}} \right)^{2} \: \frac{P_{T}^{\mathrm{tx}}\:\mathrm{exp}(-K_{a}(f_T)\:d_{0})}{d_{0}^{2}},
\end{equation}
where $P_{T}^{\mathrm{tx}}, G_{T}^{\mathrm{tx}}, G_{T}^{\mathrm{rx}}, f_{T}, d_{0},$ and $K_{a}(f_T)$ denote the transmit power of the TBSs, transmitting antenna gain of the TBS, receiving antenna gain of the TBS, THz carrier frequency, distance between the mobile user to the serving TBS, and the molecular absorption coefficient depends on the composition of the medium and also on the frequency (i.e., $f_{T}$) of the signal, respectively. 
\textcolor{black}{
For any specific THz carrier frequency $f_T$,  $K_a(f_T)$\footnote{For the sake of brevity, we will drop the argument of $K_a(f_T)$ from this point onwards in the paper.}  can be calculated as follows \cite{5995306}:
\small
\begin{equation}\label{Ka(f)}
{K_a(f_T)}\mathrm = \sum_{(i,g)} {\frac{p^2 T_{ \mathrm {sp}}  q^{(i,g)}  {N_A} S^{(i,g)} f \tanh{\left( \frac{h  c  f}{2 k_b T}\right)}} {p_0  V T^2 f^{(i,g)}_c \tanh{\left(\frac{h  c f^{(i,g)}_c}{2 k_b T}\right)}}}   F^{(i,g)}\left(f\right),
\end{equation}
\normalsize
where \(p\) and \(p_0\) are the ambient pressure of the transmission medium and the reference pressure, respectively, \(T\) is the temperature of the transmission medium, \(T_{\mathrm{sp}}\) is the temperature at standard pressure, \(q^{(i,g)}\) indicates the mixing ratio of the isotopologue \(i\) of gas \(g\), \(N_A\) refers to the Avogadro number, and \(V\) is the gas constant. The line intensity \(S^{(i,g)}\)  defines the strength of the absorption by a specific type of molecules and is directly obtained from the HITRAN database \cite{ rothman2009hitran}. In addition, \(f\) and \(f^{(i,g)}_c\) denote the THz frequency  and the resonant frequency of gas \(g\), respectively, \(c\) is the speed of light, \(h\) is the Planck's constant, and \(k_b\) is  the Boltzmann constant. For the frequency $f$, we consider the Van Vleck-Weisskopf asymmetric line shape  to evaluate:
\small
\begin{equation}
F^{(i,g)}(f)= \frac{100 \:c \:\alpha^{(i,g)} f} {\pi \:f_c^{(i,g)}} \left(\frac{1}{Y^2+ (\alpha^{(i,g)})^2}+\frac{1}{Z^2+(\alpha^{(i,g)})^2}\right),\nonumber
\end{equation}
\normalsize
where $Y= f+f_c^{(i,g)}$ and $Z= f-f_c^{(i,g)}$, and the Lorentz half-width is given as follows:
$$\alpha^{(i,g)}= \left( \left(1 - q^{(i,g)} \right) \alpha_{\mathrm {air}}^{(i,g)}+ q^{(i,g)} \alpha_0^{(i,g)}\right) \left(\frac{p}{p_0}\right) \left(\frac{T_0}{T}\right)^{\gamma},$$ where \(T_0\) is the reference temperature, the parameters  air half-widths, \(\alpha_{\mathrm{air}}^{(i,g)}\), self-broadened half-widths, \(\alpha_0^{(i,g)}\), and temperature broadening coefficient, \(\gamma\), are obtained  from the HITRAN database \cite{rothman2009hitran}. The resonant frequency of gas $g$ at reference pressure \(p_0\) is determined as \( f_c^{(i,g)}=f_{ {c_0}}^{(i,g)} + {\delta}^{{(i,g)}}{(\frac{p}{p_0})} \), where \(\delta^{(i,g)}\) is the linear pressure shift \cite{5995306}.
}

Note that $G_{T}^{\mathrm{tx}}\left(\theta_q\right)$ as well as $G_{T}^{\mathrm{rx}}\left(\theta_q\right)$ are directional transmitter and receiver antenna gains, respectively. 
The  beamforming gains from the main lobe and side lobes of the TBS transmitting antenna can be generalized as follows \cite{di2015stochastic}:   
\textcolor{black}{
\begin{equation}
\label{eq:gain2}
  G_{T}^{\mathrm{q}}\left(\theta\right) =
    \begin{cases}
      G^q_{\mathrm{max}} & \mid \theta_q \mid \leq w_{q}\\ 
      G^q_{\mathrm{min}} & \mid \theta_q \mid > w_{q}
    \end{cases},  
\end{equation}
where $q\in \{\mathrm{tx,rx}\}$, $\theta_q \in [-\pi,\pi)$
     is the angle off the boresight direction, $w_{q}$ is the beamwidth of the main lobe, $G^q_{\mathrm{max}}$ and $G^q_{\mathrm{min}}$ are the beamforming gains of the main and side lobes, respectively.} We assume that the typical mobile user's receiving beam aligns with the transmitting beam of the associated TBS through beam alignment techniques. However, for the alignment between the user and interfering TBSs, we define a random variable $D$, which can take values as  $D \in \{G^{\mathrm{tx}}_{\mathrm{max}}G^{\mathrm{rx}}_{\mathrm{max}},G^{\mathrm{tx}}_{\mathrm{max}}G^{\mathrm{rx}}_{\mathrm{min}},G^{\mathrm{tx}}_{\mathrm{min}}G^{\mathrm{rx}}_{\mathrm{max}},G^{\mathrm{tx}}_{\mathrm{min}}G^{\mathrm{rx}}_{\mathrm{min}}\},$ and the respective probability for each case is $F_{\mathrm{tx}}F_{\mathrm{rx}}$, $F_{\mathrm{tx}}(1-F_{\mathrm{rx}})$, $(1-F_{\mathrm{tx}})F_{\mathrm{rx}}$, and $(1-F_{\mathrm{tx}})(1-F_{\mathrm{rx}})$, where $F_{\mathrm{tx}} = \frac{\theta_{\mathrm{tx}}}{2\pi}$ and $F_{\mathrm{rx}} = \frac{\theta_{\mathrm{rx}}}{2\pi}$, respectively. {Assuming that the main lobe of the typical mobile user's receiver  is coinciding with that of its desired TBS\footnote{\textcolor{black}{The  beamforming model is based on a two-lobe approximation of the antenna pattern. Although simple, the model is tractable and capture primary features such as the directivity gain, the front-to-back ratio, and the half-power beamwidth~\cite{di2015stochastic,6840343}.} }, its SINR the can be formulated as follows:}
\begin{align}{\label{sinr_1}}
    &\mathrm{SINR}_{T} = \frac{G_{T}^{\mathrm{tx}} G_{T}^{\mathrm{rx}} \left(\frac{c}{4\pi f_{T}} \right)^2 P_{T}^{\mathrm{tx}}\: \mathrm{exp}(-K_a d_{0})d_{0}^{-2} }{N_{T} + I_{T}},
\nonumber\\&=
    \frac{ \gamma_T \: P_{T}^{\mathrm{tx}}\:\mathrm{exp}(-K_a d_{0})d_{0}^{-2}}{N_T+I_T},
\end{align}
where $I_{T} = \sum_{i\in \Phi_T \setminus 0}\gamma_T \:P_{T}^{\mathrm{tx}}\:F {d_{i}}^{-2} \mathrm{exp}(-K_a \:{d_{i}})$ is the cumulative interference at the mobile user, $d_{i}$ is the distance of the that user to the $i$-th interfering TBS, $F = F_{\mathrm{tx}}F_{\mathrm{rx}}= \frac{\theta_{\mathrm{tx}} \theta_{\mathrm{rx}}}{4 \pi^2}$ is the probability of alignment between the main lobes of the interferer and the typical user assuming negligible side-lobe gains
and  $\gamma_{T} = G^{\mathrm{tx}}_{T}\:G^{\mathrm{rx}}_{T}\:{c^2}/{\left(4\pi f_{T}\right)^2}$.  The cumulative thermal and molecular absorption noise is \cite{7390991}, \cite{8763780}, \cite{9322606}:
\begin{align}
    N_T= N_{0} +  \:P_{T}^{\mathrm{tx}} \gamma_T \:{d_{0}^{-2}} \:(1-e^{-K_a \:{d_{0}}})+\nonumber\\ \sum_{i\in \Phi_T \backslash 0} \gamma_T F \:P_{T}^{\mathrm{tx}} \:{d_{i}^{-2}}(1-\mathrm{exp}(-K_a\:{d_{i}})).
\end{align}
\textcolor{black}{Note that the internal vibration of the molecules re-emit a part of the absorbed energy back to  the  channel resulting in the  so-called  \textit{molecular absorption  noise} \cite{5995306,7248500, Kokkoniemi2016}. The molecular absorption noise is induced by the transmissions of the users sharing the same frequency. As such, the second and third terms in $N_T$ represent the molecular absorption noise due to the desired users' transmission and the interfering users' transmission, respectively.} 
The SINR from TBS can then  be modeled  as follows:
\small
     \begin{align}
    &\mathrm{SINR}_{T} = \frac{P_{T}^{\mathrm{tx}} \gamma_T  d_{0}^{-2} {e^{-K_{a}d_{0}}}}{ N_{0} + \:P_{T}^{\mathrm{tx}} \gamma_T \:{d_{0}^{-2}} \:(1-e^{-K_a \:{d_{0}}})+ \sum_{i\in \Phi_T \backslash 0} \:P_{T}^{\mathrm{tx}} \gamma_T F \:{d_{i}^{-2}}}. \nonumber
\end{align}
\normalsize
\textcolor{black}{The SINR expression is different from the traditional RF  systems due to THz channel propagation model in the numerator and molecular noise consideration in the denominator.}

\subsection{Methodology of Analysis}
The methodology of analyzing the  HO probability  and mobility-aware coverage probability in a {multi-band} network can be summarized as follows:
\begin{itemize}
\item \textcolor{black}{Using Eq.(3),} derive the conditional probability density function (PDF) of the distance of a {mobile} user  {tagged} to the TBS ($f_{r_T}(r_T)$) and RBS ($f_{r_R}(r_R)$) in a {multi-band} network.
\item \textcolor{black}{Using Eq.(3),} derive the conditional HO probability of a typical user who is initially associated to TBS ($\mathbb{P}(H_T)$) and  initially associated to RBS ($\mathbb{P}(H_R)$).
\item \textcolor{black}{Using Eq.(3)} and  the association probabilities of the typical user to TBSs and RBSs, i.e., $A_T$ and $A_R$, respectively, and conditional HO probabilities $\mathbb{P}(H_R)$ and $\mathbb{P}(H_T)$, derive the overall HO probability, i.e., $\mathbb{P}(H)$ of the typical user.
\item \textcolor{black}{Using Eq.(3),} we derive the LT of the  cumulative interference and molecular noise as well as the coverage probability of the typical user without mobility $\mathbb{C}$.
\item Derive the coverage probability of the typical user  with mobility  $\mathbb{C}_M$.
\end{itemize}

\section{HO Probability in a Hybrid RF-THz Network}
In this section, first develop HO criterion from TBS and derive the conditional HO probability from TBS, i.e., $\mathbb{P}(H_T)$, which comprises of the HO probability from TBS to TBS (horizontal HO)  and TBS to RBS (vertical HO). Then, we formulate  and simplify  the HO criterion from RBS and derive the HO probability from RBS, i.e,  $\mathbb{P}(H_R)$, which comprises of the HO probability from RBS to RBS (horizontal HO) and RBS to TBS (vertical HO). Finally, develop the overall HO probability of the mobile user, which is defined as:
\begin{equation}
    \mathbb{P}(H)=  A_R\mathbb{P}(H_R)+ A_T\mathbb{P}(H_T),
\end{equation}
where $A_R$ and $A_T$ denote the association probabilities of a user with RBS and TBS, respectively.

{From the relationship between the received powers of TBSs and RBSs, the association probability to TBS can be defined as follows:}
\begin{equation}
\label{eq:Asso_Probab_Proof}
\begin{split}
A_T & = \mathbb{E}_{d_0}\left[\mathbb{P}\left( P_{T}^{\mathrm{rx}}>P_{R}^{\mathrm{rx}}\right)\right], \\ 
& = \mathbb{E}_{d_0}\left[\mathbb{P}\left(P^{\mathrm{tx}}_{{T}}\gamma_{T}\frac{\exp\left(-K_{a} d_0\right)}{d_0^2}>P^{\mathrm{tx}}_{{R}}\gamma_{R} r_0^{-\alpha}\right)\right],\\ 
& \stackrel{(a)}{=} \mathbb{E}_{d_0}\left[\exp \left(-\pi\lambda_{R}\left(Q d_0^2\exp\left(K_{a} d_0\right)\right)^{\frac{2}{\alpha}}\right)\right],
\end{split}
\end{equation}
where from the null probability of PPP $\Phi_R$ and $Q = \frac{P_{R}^{\mathrm{tx}}}{P_{T}^{\mathrm{tx}}} \frac{\gamma_R}{\gamma_T}$, we get step $\left(a\right)$ in Eq. (\ref{eq:Asso_Probab_Proof}). 
This null property stated the probability that no RBSs are closer to a user than the distance $z$, which is $\mathbb{P}\left(\rho\geq z\right) = \exp \left(-\pi \lambda_{R}z^2\right)$ for given a tier of RBSs with intensity $\lambda_{\mathrm{R}}$. The PDFs of the distances  between the typical user and the closest RBS and TBS are given as $f_{r_0}(r_0) = 2\pi \lambda_{T}r_0\exp\left(-\pi\lambda_{T}r_0^2\right)$ and $f_{d_0}(d_0)= 2\pi \lambda_{T} d_0 \mathrm{exp}\left(-\pi \lambda_{T} d_0^{2}\right)$, respectively. Therefore, the association probability with TBS can be achieved by averaging over $d_0$. 
\begin{align}
    A_{T} =\int_{0}^{\infty} \mathrm{exp} \left(-\pi \lambda_R Q^{\frac{2}{\alpha}}\:d_0^{\frac{4}{\alpha}}\:\mathrm{exp}\left( \frac{2K_{a}\:d_0}{\alpha} \right)\right) f_{d_0}(d_0) dd_0,
    \label{assoc1}
\end{align}
where $A_R=1-A_T$. \textcolor{black}{When $\alpha=4$ and $K_a \rightarrow 0$, a closed-form expression can be given as:
\begin{align}
    &A_{T} =\int_{0}^{\infty} \mathrm{exp} \left(-\pi \lambda_R Q^{\frac{2}{\alpha}}\:d_0^{\frac{4}{\alpha}}\right) f_{d_0}(d_0) dd_0 
    \nonumber\\&= 1 - 0.5 \pi  \lambda_R \,e^{\frac{\pi Q \lambda_R^2}{4 \lambda_T}} \sqrt{\frac{Q}{\lambda_T}}\,\mathrm{erfc}\left(\frac{\lambda_R}{2}\sqrt{\frac{\pi Q}{\lambda_T}}\right),
\end{align}
}
\begin{figure*}[t!]\label{ho_scn}
\centering
\includegraphics[width=5in]{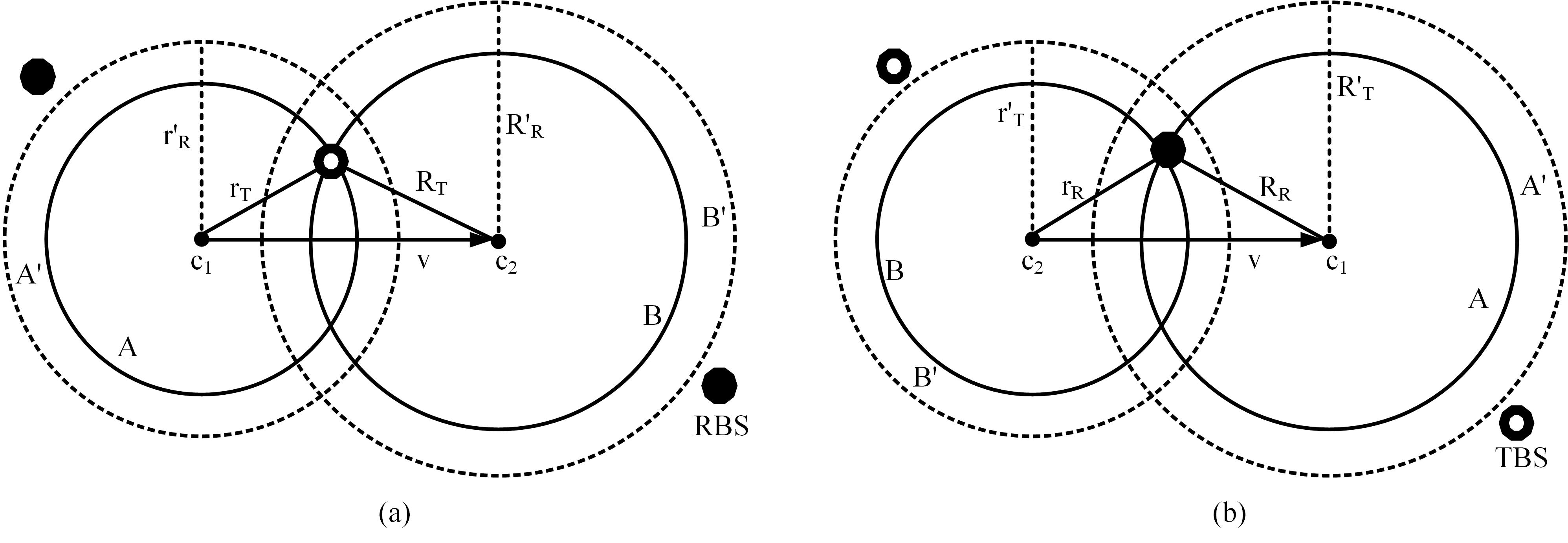}
\caption{Graphical illustration of different HO events, 
(a) HO from TBS: At $c_1$, a mobile user is initially associated with a TBS, where the distance between user and TBS is $r_T$. After HO at $c_2$,  the distance becomes $R_T$. The virtual tiers are shown by the dotted lines, i.e., the equivalent distance of the TBS from $c_1$ and $c_2$ in RF tier is represented by $r_{R}^{\prime}$ and $R_{R}^{\prime}$, respectively. 
(b) HO from RBS: At $c_2$, a mobile user is initially associated with a RBS, where the distance between user and RBS is $r_R$. After HO at $c_1$,  the distance becomes $R_R$. The virtual tiers are shown by the dotted lines, i.e., the equivalent distance of the RBS from $c_2$ and $c_1$ is represented by $r_{T}^{\prime}$ and $R_{T}^{\prime}$, respectively. 
}
\label{graph}
\end{figure*}

\subsection{HO Probability Characterization from TBS}
\subsubsection{HO Criterion from TBS}
Fig. \ref{graph}(a) illustrates an outline of an user, who is initially tagged with a TBS at the position $c_1$. Let $r_T$ is the distance between the user and the tagged TBS in a hybrid RF-THz network whose PDF is given as follows. 
\begin{lemma}
\label{lm:Distance_distribution}
The conditional PDF of the distance from a mobile user initially connected to TBS to the desired TBS is:
\begin{align}
\label{eq:Distance_distribution_THz}
&f_{r_{T}}(r_{T}) = \frac{2\pi \lambda_{T} \: r_{T}}{A_{T}} \times\nonumber\\& \mathrm{exp}\left(-\pi \lambda_{T} r_{T}^{2} -\pi \lambda_{R} (r_{T}^{2} Q)^{\frac{2}{\alpha}}\:\mathrm{exp}\left( \frac{2 K_{a} r_{T}}{\alpha} \right)\right).
\end{align}
\end{lemma}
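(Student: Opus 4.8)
The plan is to obtain $f_{r_{T}}(r_{T})$ as a conditional density by writing it as the joint density of the event ``the nearest TBS sits at distance $r_{T}$'' together with the event ``the user associates with a TBS,'' normalized by the association probability $A_{T}$ from Eq.~(\ref{assoc1}). Formally, $f_{r_{T}}(r_{T}) = f_{d_0}(r_{T})\,\mathbb{P}(\text{assoc.\ TBS}\mid d_0=r_{T})/A_{T}$, so the two ingredients I need are the unconditional nearest-TBS distance density and the conditional association probability.

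First I would argue that, because the received power is strictly decreasing in distance within each tier, the maximum-received-power rule reduces to a comparison between the \emph{nearest} TBS and the \emph{nearest} RBS: the user associates with a TBS exactly when the nearest TBS delivers more power than the nearest RBS. The nearest-TBS distance density is the standard PPP void-probability derivative $f_{d_0}(r_{T})=2\pi\lambda_{T} r_{T}\exp(-\pi\lambda_{T} r_{T}^2)$, already recorded in the excerpt.

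Next I would translate the power-dominance event into a distance threshold on the RBS side. Conditioned on $d_0=r_{T}$, association with the TBS requires $\gamma_{T} P_{T}^{\mathrm{tx}}\exp(-K_{a} r_{T})/r_{T}^2 > \gamma_{R} P_{R}^{\mathrm{tx}} r_0^{-\alpha}$ for the nearest RBS at distance $r_0$, which rearranges to $r_0 > (Q r_{T}^2\exp(K_{a} r_{T}))^{1/\alpha}$ with $Q=(P_{R}^{\mathrm{tx}}\gamma_{R})/(P_{T}^{\mathrm{tx}}\gamma_{T})$. Since $\Phi_{R}$ is independent of $\Phi_{T}$, the conditional association probability is exactly the probability that $\Phi_{R}$ has no point inside this equivalent radius, i.e.\ the void probability $\exp(-\pi\lambda_{R} (Q r_{T}^2\exp(K_{a} r_{T}))^{2/\alpha}) = \exp(-\pi\lambda_{R} (r_{T}^2 Q)^{2/\alpha}\exp(2K_{a} r_{T}/\alpha))$. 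Multiplying the two densities and dividing by $A_{T}$ yields the claimed expression.

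The only delicate point is the handling of the Beer--Lambert factor $\exp(-K_{a} r_{T})$ when inverting the power inequality into a distance threshold: unlike the pure power-law RF case, the equivalent RBS radius carries an exponential-in-$r_{T}$ term, so I would take care that the algebra of the $1/\alpha$-exponentiation is carried through consistently (producing the $\exp(2K_{a} r_{T}/\alpha)$ term) and that $Q$ collects the correct ratio of gains and powers. Everything else is a direct application of the PPP void probability and the definition of a conditional density.
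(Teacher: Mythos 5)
Your proposal is correct and follows essentially the same route as the paper's Appendix~A: both decompose the event into the nearest-TBS distance density times the conditional void probability of $\Phi_R$ inside the equivalent radius $(Q r_T^2 e^{K_a r_T})^{1/\alpha}$, then normalize by $A_T$ (the paper merely phrases this via differentiating the joint complementary CDF $\mathbb{P}(d_0>r_T,\,k=T)$ rather than writing the density product directly). Your explicit justification that the max-power rule reduces to a nearest-BS comparison in each tier, and the careful tracking of the $\exp(2K_a r_T/\alpha)$ factor, match what the paper uses implicitly.
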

\begin{proof}
See \textbf{Appendix~A}.
\end{proof}
The area $A$ is centered by $c_1$ with radius $r_T$. Assume that the  user moves its position from $c_1$ to $c_2$. The new distance $R_T$ denotes the distance between $c_2$ and the tagged TBS and  $B$ denotes the area centered at $c_2$ with radius $R_T$. At position $c_2$, \textit{the vertical HO takes place when the maximum received power of RBS is greater than the  TBS}, i.e.,  $P_{T}^{\mathrm{rx}} < P_{R}^{\mathrm{rx}}$, which results in:
\begin{align}\label{t2r_eqn}
     r_{R} < \mathrm{exp}\left( \frac{K_{a}\:r_{T}}{\alpha} \right)\left( Q\, r_{T}^2 \right)^{\frac{1}{\alpha}}\triangleq r_{R}^{\prime},
\end{align}
From Eq. (\ref{t2r_eqn}), we define $r_{R}^{\prime} $ is the equivalent distance of $r_{T}$. That is, when $r_{T}>r_{R}>r_{R}^{\prime}$,  vertical HO will not occur because it violates Eq. (\ref{t2r_eqn}). 
\subsubsection{HO Analysis}
Provided that the typical mobile user is {originally} tagged to TBS, the conditional HO probability from TBS can be determined by averaging over $r_T$ and $\theta$ as follows:
\begin{align}
    \mathbb{P}(H_T)=1-\mathbb{P}(\overline{H}_{T}) = 1-\mathbb{E}_{r_T,\theta}[\mathbb{P}(\overline{H}_{T}|r_T,\theta)].
\end{align}

To derive the no HO probability ($\mathbb{P}(\overline{H}_{T})$)  of a typical user who is associated to TBS, we replace the serving TBS by a virtual TBS in RF tier with distance $r_{R}^{\prime}$ away from the target mobile user. There will be no HO if  no RBSs or TBSs are closer to the user then  $r_{R}^{\prime}$.  In Fig. \ref{graph}(a), the area $A^{\prime}$ centered at $c_1$ with radius $r_{R}^{\prime}$. Here, the THz tier is the serving tier suggests all RBSs are found outside the area $A^{\prime}$. Likewise, $B^\prime$ is the area centered at $c_2$ with radius $R_{R}^{\prime}$, which is the corresponding distance of $R_{T}$. If RBSs are not remained within the area $B^{\prime}$, then initial TBS will remain the target BS even after the movement. 
\begin{lemma} Given a mobile user is initially associated with a TBS, the conditional probability of no HO from the serving TBS in a hybrid RF-THz network finds as follows:
\begin{align}
    &\mathbb{P}(\overline{H}_{T})  =  \frac{1}{\pi} \left(\int_{\theta = 0}^{\frac{\pi}{2}} \int_{r_{T} = 0}^{\infty} f_{r_{T}}(r_{T}) \: e^{ \left(-\lambda_{T} S_{T} - \lambda_{R} S_{T}^{\prime}\right)} d r_{T} d\theta \right.\nonumber\\&\left.
     + \int_{\theta = \frac{\pi}{2}}^{\pi} \int_{r_{T} = 0}^{v\:\mathrm{cos}(\pi - \theta)} f_{r_{T}}(r_{T}) e^{\left(-\lambda_{T} C_{T} - \lambda_{R} C_{T}^{\prime}\right)} d r_{T} d\theta 
   \right.\nonumber\\&\left. + \int_{\theta = \frac{\pi}{2}}^{\pi} \int_{r_{T} = v\:\mathrm{cos}(\pi - \theta)}^{\infty}  f_{r_{T}}(r_{T})\: e^{\left(-\lambda_{T} S_{T} - \lambda_{R} S_{T}^{\prime} \right)}  d r_{T} d\theta \right), \nonumber
\end{align}
where
$ S_{T} = |B| - |B\cap A|$,
$ S_{T}^{\prime} = |B^{\prime}| - |B^{\prime} \cap A^{\prime}|$,
$ C_{T} = R_{T}^{2}(\pi - \theta_{1}^{\prime T}) + r_{T} \: v \: \mathrm{sin}\theta - r_{T}^{2}(\pi - \theta),$
$ C_{T}^{\prime} = R_{R}^{\prime 2}(\pi - \theta_{3}^{\prime T}) + r_{R}^{\prime} \: v \: \mathrm{sin}\theta_{2}^{T} - r_{R}^{\prime 2}(\pi - \theta_{2}^{T}),$
$ \theta_{1}^{\prime T} = \theta - \pi + \mathrm{sin}^{-1} \left(\frac{v\:\mathrm{sin}\theta}{R_T} \right),$
$ \theta_{3}^{\prime T} = \theta - \pi + \mathrm{sin}^{-1} \left(\frac{v\:\mathrm{sin}\theta}{R_{R}^{\prime}} \right)$.
\end{lemma}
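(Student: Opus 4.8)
The plan is to condition on the distance $r_T$ to the serving TBS and on the angle $\theta$ between the user's direction of motion and the line joining $c_1$ to that TBS, and then to exploit the independence of the two PPPs $\Phi_T$ and $\Phi_R$. After the displacement of length $v$ from $c_1$ to $c_2$, the serving TBS retains the maximum received power precisely when two independent events hold simultaneously: (i) no other TBS is nearer to $c_2$ than $R_T$ (the \emph{horizontal} no-HO event, governed by $\Phi_T$), and (ii) no RBS beats the serving TBS at $c_2$, which by Eq.~(\ref{t2r_eqn}) and the equivalence-distance construction is the event that no RBS lies within the equivalent radius $R_R'$ of $c_2$ (the \emph{vertical} no-HO event, governed by $\Phi_R$). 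First I would write $\mathbb{P}(\overline{H}_T \mid r_T,\theta)$ as the product of the two corresponding void probabilities.

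The reduction to crescent-shaped set-differences is the next step. Initial association to the TBS at $c_1$ already forces $\Phi_T$ to be empty on the disk $A$ (centre $c_1$, radius $r_T$) and $\Phi_R$ to be empty on the virtual disk $A'$ (centre $c_1$, radius $r_R'$); these are exactly the events already encoded in the conditional density $f_{r_T}$ of Lemma~\ref{lm:Distance_distribution}. Conditioned on this, the \emph{additional} emptiness required for no-HO is only over the regions $B\setminus A$ and $B'\setminus A'$, where $B$ (resp.\ $B'$) is the disk of radius $R_T$ (resp.\ $R_R'$) about $c_2$. By the void probability of a PPP over these reduced regions, $\mathbb{P}(\overline{H}_T\mid r_T,\theta)=\exp\!\big(-\lambda_T S_T-\lambda_R S_T'\big)$ with $S_T=|B|-|B\cap A|$ and $S_T'=|B'|-|B'\cap A'|$. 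Averaging over $r_T$ with density $f_{r_T}$ and over the motion direction, which is uniform, then yields the stated integral; the prefactor $\tfrac1\pi$ arises because the reflection symmetry of the two-disk configuration about the $c_1$--TBS axis lets me fold the uniform direction from $[0,2\pi)$ onto $[0,\pi]$, doubling the density $1/(2\pi)$. I emphasise that using $B\setminus A$ rather than the full disk $B$ is the point where one must be careful not to double-count the region already emptied at $c_1$.

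What remains --- and what I expect to be the main obstacle --- is the explicit evaluation of the crescent areas $S_T,S_T'$ as functions of $(r_T,\theta,v)$, together with the case analysis that produces the three separate integrals. I would use the law of cosines, $R_T^2=r_T^2+v^2-2r_Tv\cos\theta$ (and the analogous relation for $R_R'$ after substituting $r_R'=\exp(K_a r_T/\alpha)(Q r_T^2)^{1/\alpha}$ from Eq.~(\ref{t2r_eqn})), together with standard circular-segment formulas to compute $|B\cap A|$. The subtlety is that the circles bounding $A$ and $B$ both pass through the serving TBS, so their overlap is a lens whose closed form changes depending on whether $c_1$ falls inside $B$; this is precisely the geometric event that flips at the threshold $r_T=v\cos(\pi-\theta)$ for obtuse $\theta$, splitting the range $\theta\in[\tfrac\pi2,\pi]$ into two $r_T$-sub-intervals and replacing $S_T,S_T'$ by the segment expressions $C_T,C_T'$ with half-angles $\theta_1^{\prime T}$ and $\theta_3^{\prime T}$. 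The heaviest bookkeeping is that $r_R'$ depends on $r_T$ through the Beer--Lambert exponential, so the virtual-tier crescent $B'\setminus A'$ couples nonlinearly to $r_T$ and must be recomputed rather than inherited from the TBS geometry; carrying this through in each regime, and verifying that the acute case $\theta\in[0,\tfrac\pi2]$ requires no split, is the crux of the derivation and is relegated to the appendix.
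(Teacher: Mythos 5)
Your proposal follows essentially the same route as the paper's Appendix~B: condition on $(r_T,\theta)$, use the independence of $\Phi_T$ and $\Phi_R$ and their void probabilities over the crescents $B\setminus(B\cap A)$ and $B^{\prime}\setminus(B^{\prime}\cap A^{\prime})$ (the disks $A,A^{\prime}$ being already empty by the initial-association event encoded in $f_{r_T}$), evaluate the lens areas via the law of cosines, flip the arcsine branch when $r_T<v\cos(\pi-\theta)$ for obtuse $\theta$, and fold the uniform direction onto $[0,\pi]$ to obtain the $1/\pi$ prefactor. One minor slip worth noting: the event that triggers the branch flip at $r_T=v\cos(\pi-\theta)$ is that the triangle angle at the serving TBS (the intersection point of the two circles) becomes obtuse, not that $c_1$ leaves the disk $B$ (which would occur at $r_T=2v\cos(\pi-\theta)$); since you state the correct threshold, the case split and the resulting $C_T,C_T^{\prime}$ terms are unaffected.
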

\begin{proof}
See \textbf{Appendix~B}.
\end{proof}
{In the following, as a special case of the aforementioned lemma, we have provided the HO probability of a mobile user when $\lambda_R \rightarrow 0$ which results in a stand-alone THz network.
\begin{corollary} As $\lambda_R \rightarrow 0$ then $A_T \rightarrow 1$, the conditional HO probability  can  be simplified as:
\begin{align}
     \mathbb{P}(H_{T})  = & 1 - \frac{1}{\pi} \left(\int_{\theta = 0}^{\frac{\pi}{2}} \int_{r_T = 0}^{\infty} f_{d_{0}}(r_T)\: e^{ -\lambda_T S_T } d r_T d\theta 
    \right.
     \nonumber\\&\left.
     + \int_{\theta = \frac{\pi}{2}}^{\pi} \int_{r_T = 0}^{v\:\mathrm{cos}(\pi - \theta)} f_{d_{0}}(r_T)\: e^{ -\lambda_T C_T }d r_T d\theta 
   \right.\nonumber\\&\left. + \int_{\theta = \frac{\pi}{2}}^{\pi} \int_{r_T = v\:\mathrm{cos}(\pi - \theta)}^{\infty}  f_{d_{0}}(r_T)\: e^{ -\lambda_T S_T }  d r_T d\theta \right). \nonumber
\end{align}
\end{corollary}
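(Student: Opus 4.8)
The plan is to obtain the corollary directly as the $\lambda_R \to 0$ limit of the general no-HO expression in Lemma~2, combined with the identity $\mathbb{P}(H_T) = 1 - \mathbb{P}(\overline{H}_T)$. Two distinct effects occur as $\lambda_R \to 0$, and I would handle them separately. First, the conditional density $f_{r_T}(r_T)$ from Lemma~1 degenerates: since $A_T \to 1$ and the exponent $-\pi \lambda_R (r_T^2 Q)^{2/\alpha}\exp(2 K_a r_T / \alpha)$ tends to $0$, the density reduces to $2\pi \lambda_T r_T \exp(-\pi \lambda_T r_T^2)$, which is precisely the nearest-TBS density $f_{d_0}(r_T)$. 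Second, the RBS-dependent exponential factors $e^{-\lambda_R S_T'}$ and $e^{-\lambda_R C_T'}$ each tend to $1$, since $S_T'$ and $C_T'$ are finite geometric quantities (areas associated with the virtual RF tier) carrying no hidden $\lambda_R$-dependence of their own.

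Concretely, I would first observe that the purely THz geometric quantities $S_T$ and $C_T$ in Lemma~2 depend only on the TBS distances $r_T, R_T$ and the angle $\theta$, and are therefore unaffected by the limit. Thus all $\lambda_R$-dependence resides in (i) the density $f_{r_T}$ and (ii) the factors $e^{-\lambda_R S_T'}$, $e^{-\lambda_R C_T'}$. Passing $\lambda_R \to 0$ inside each of the three integrals of Lemma~2 then replaces $f_{r_T}(r_T)$ by $f_{d_0}(r_T)$ and sends the RBS exponential factors to unity, reproducing exactly the three integrals claimed for $\mathbb{P}(\overline{H}_T)$ in the corollary. Substituting into $\mathbb{P}(H_T) = 1 - \mathbb{P}(\overline{H}_T)$ yields the stated expression.

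The one step demanding care — and the main, if modest, obstacle — is justifying the interchange of the limit with the $r_T$-integration. The difficulty is that $r_R'$, and hence $S_T'$ and $C_T'$, grows with $r_T$ through the factor $\exp(K_a r_T / \alpha)$, so the pointwise statement $\lambda_R S_T' \to 0$ must be promoted to convergence of the integrals. I would invoke dominated convergence: for all sufficiently small $\lambda_R$ the integrand is bounded above by $f_{d_0}(r_T)$ up to the prefactor $1/A_T \to 1$, because $e^{-\lambda_R S_T'} \le 1$ and the Gaussian-type tail of $2\pi \lambda_T r_T \exp(-\pi \lambda_T r_T^2)$ furnishes an integrable dominating function uniformly in $\lambda_R$. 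This legitimizes taking the limit under the integral sign in all three terms and completes the argument.
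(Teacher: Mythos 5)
Your proof is correct and follows essentially the same route the paper intends: the corollary is obtained by specializing Lemma~2 at $\lambda_R \to 0$, whereupon $A_T \to 1$, the conditional density $f_{r_T}$ collapses to the nearest-TBS density $f_{d_0}$, and the RF-tier exponential factors $e^{-\lambda_R S_T'}$, $e^{-\lambda_R C_T'}$ tend to unity. The paper states this as an immediate special case without argument; your added dominated-convergence justification for passing the limit under the integral is a harmless (and slightly more careful) elaboration of the same idea.
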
}
Another special case is a situation when a mobile user moves in a straight line. In this case, \textbf{Lemma~2} can be simplified by substituting $\theta=0$ as follows:
\textcolor{black}{
\begin{corollary} Given a mobile user is initially associated with a TBS and moving in a straight line, the conditional probability of no HO from the serving TBS can be given as follows:
\begin{align}
    \mathbb{P}(\overline{H}_{T})  = & \frac{1}{\pi}  \int_{r_{T} = 0}^{\infty} f_{r_{T}}(r_{T}) \: \mathrm{exp} \left(-\lambda_{T} S_{T} - \lambda_{R} S_{T}^{\prime}\right) d r_{T} 
\end{align}
where
$ S_{T} =   \pi (R_{T}^{2} -  r_T^2)$,
$ S_{T}^{\prime} =  \pi R_{R}^{\prime 2} - r_{R}^{\prime 2}(\pi - \theta_{2}^{T}) + r_{R}^{\prime} \:v\: \mathrm{sin}{\theta_{2}^{T}}$, $ R_{T}^{2} = r_{T}^{2} + v^{2}+2r_{T}v,
 R_{R}^{\prime} = \left(R_{T}\right)^{\frac{2}{\alpha}}\:e^{ \frac{K_{a}\:R_{T}}{\alpha}} \left( \frac{P_{R}^{tx} Q}{P_{T}^{tx}} \right)^{\frac{1}{\alpha}}, r_{R}^{\prime} = \left(r_{T}\right)^{\frac{2}{\alpha}}\:e^{ \frac{K_{a}\:R_{T}}{\alpha}}\left( \frac{P_{R}^{tx} Q}{P_{T}^{tx}} \right)^{\frac{1}{\alpha}},  \theta_{2}^{T} = \mathrm{cos}^{-1} \left( \frac{r_{R}^{\prime 2} + v^2 - R_{R}^{\prime 2}}{2r_{R}^{\prime} v} \right)$.
\end{corollary}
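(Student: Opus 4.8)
The plan is to specialize \textbf{Lemma~2} to the straight-line trajectory in which the user moves radially away from its serving TBS, i.e.\ $\theta=0$. First I would observe that $\theta=0$ lies in the first angular branch $[0,\tfrac{\pi}{2}]$ of the three-part integral in \textbf{Lemma~2}; the other two branches cover $\theta\in[\tfrac{\pi}{2},\pi]$ and contribute nothing once $\theta$ is fixed. Hence only the integrand $f_{r_{T}}(r_{T})\,e^{-\lambda_{T}S_{T}-\lambda_{R}S_{T}^{\prime}}$ survives, integrated over $r_{T}\in(0,\infty)$, which already has the single-integral shape claimed in the corollary (the prefactor $\tfrac{1}{\pi}$ being inherited verbatim from \textbf{Lemma~2}). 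What remains is to evaluate the two swept-region areas $S_{T}$ and $S_{T}^{\prime}$ under the collinear geometry $\theta=0$.

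For $S_{T}=|B|-|B\cap A|$ I would use the degenerate law of cosines: with the tagged TBS, $c_{1}$, and $c_{2}$ collinear and $c_{1}$ lying between the TBS and $c_{2}$, the post-move distance is $R_{T}=r_{T}+v$, so $R_{T}^{2}=r_{T}^{2}+v^{2}+2r_{T}v$. Because the center separation $|c_{1}c_{2}|=v$ satisfies $v+r_{T}=R_{T}$, the disk $A$ (radius $r_{T}$, center $c_{1}$) is internally tangent to and contained in $B$ (radius $R_{T}$, center $c_{2}$). Thus $B\cap A=A$, giving $|B\cap A|=\pi r_{T}^{2}$ and $S_{T}=\pi R_{T}^{2}-\pi r_{T}^{2}=\pi(R_{T}^{2}-r_{T}^{2})$, as stated.

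The hard part is $S_{T}^{\prime}=|B^{\prime}|-|B^{\prime}\cap A^{\prime}|$, because the THz-to-RF equivalence map of Eq.~(\ref{t2r_eqn}) is nonlinear (it carries the exponent $\tfrac{2}{\alpha}$ and the Beer--Lambert factor $e^{K_{a}(\cdot)/\alpha}$), so the clean containment that trivializes $S_{T}$ fails for the equivalent disks. Concretely, $A^{\prime}$ has radius $r_{R}^{\prime}$ and center $c_{1}$, $B^{\prime}$ has radius $R_{R}^{\prime}$ and center $c_{2}$, and the center distance is still the physical displacement $v$, but $r_{R}^{\prime}$ and $R_{R}^{\prime}$ scale differently, so the two circles only partially overlap. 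I would locate an intersection point $X$ of $\partial A^{\prime}$ and $\partial B^{\prime}$ and apply the law of cosines in triangle $c_{1}c_{2}X$ (sides $r_{R}^{\prime}$, $v$, $R_{R}^{\prime}$) to recover the half-angle $\theta_{2}^{T}=\cos^{-1}\!\big(\tfrac{r_{R}^{\prime 2}+v^{2}-R_{R}^{\prime 2}}{2r_{R}^{\prime}v}\big)$ at $c_{1}$. Decomposing the overlap into the appropriate circular sector(s) and triangle(s) and subtracting from $|B^{\prime}|=\pi R_{R}^{\prime 2}$ then yields $S_{T}^{\prime}=\pi R_{R}^{\prime 2}-r_{R}^{\prime 2}(\pi-\theta_{2}^{T})+r_{R}^{\prime}v\sin\theta_{2}^{T}$.

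Finally I would substitute the two closed-form areas into the surviving branch of \textbf{Lemma~2} and average over $r_{T}$ against the conditional density $f_{r_{T}}$ of \textbf{Lemma~1}, which produces exactly the stated single integral. The main obstacle is the $S_{T}^{\prime}$ computation: one must correctly identify which segment/sector of each equivalent disk lies in the overlap (this depends on whether $c_{1}$ lies inside $B^{\prime}$, i.e.\ on the sign of $v-R_{R}^{\prime}$) and verify that the first branch of \textbf{Lemma~2} is indeed the one active at $\theta=0$. I would also reconcile the displayed definitions of $r_{R}^{\prime}$ and $R_{R}^{\prime}$ with Eq.~(\ref{t2r_eqn}) (the exponent is printed as $e^{K_{a}R_{T}/\alpha}$ and an extra power ratio appears), since these look like transcription slips that should reduce to $r_{R}^{\prime}=e^{K_{a}r_{T}/\alpha}(Qr_{T}^{2})^{1/\alpha}$ and the analogous expression with $R_{T}$.
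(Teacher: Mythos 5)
Your proposal is correct and takes essentially the same route as the paper, which obtains the corollary simply by substituting $\theta=0$ into \textbf{Lemma~2}: only the first angular branch is active, the containment $A\subseteq B$ (since $R_T=r_T+v$) gives $S_T=\pi(R_T^2-r_T^2)$, and the circle-intersection computation of Appendix~B with $\theta_3^T=0$ gives the stated $S_T^{\prime}$. Your closing observations are also well founded: the retained $\tfrac{1}{\pi}$ prefactor and the $e^{K_a R_T/\alpha}$ and $\left(P_R^{tx}Q/P_T^{tx}\right)^{1/\alpha}$ factors in the displayed $r_R^{\prime}$ are transcription artifacts of the substitution rather than substantive differences, the intended equivalent distances being those of Eq.~(\ref{t2r_eqn}).
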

}

\subsection{HO Probability Characterization from RBS}
\subsubsection{ HO Criterion from RBS}
Fig. \ref{graph}(b) denotes a situation where a mobile user is tagged with a given RBS at the position $c_2$. Given the user is associated to the RBS in a {multi-band} network, let $r_R$ be the distance between the mobile user and RBS whose PDF is given below. 
\begin{lemma}
\label{lm:Distance_distribution}
The PDF of the conditional distance $r_R$ from a typical mobile user initially connected to RBS to desired RBS can be acquired as follows:
\begin{align}
     f_{r_{R}}(r_{R}) = \frac{2\pi \lambda_{R} r_{R}}{A_{R}}\: \mathrm{exp} \left(-\pi \lambda_{R} r_{R}^2  -\pi \lambda_{T}\left(\frac{r_R^\alpha}{ Q}\right)^{\frac{2}{2+\mu}}\right).
\end{align}
\end{lemma}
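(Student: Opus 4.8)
The plan is to mirror the derivation of \textbf{Lemma~1}, but with the roles of the two tiers interchanged, and to dispatch the Beer--Lambert term by a power-law approximation. I would start from the nearest-neighbour (contact) distribution of the RBS tier: conditioned on the serving RBS lying at distance $r_R$, the corresponding density is $2\pi\lambda_R r_R\exp(-\pi\lambda_R r_R^2)$, which follows from the null probability of the PPP $\Phi_R$ exactly as used in the association derivation of Eq.~(\ref{eq:Asso_Probab_Proof}). Since all RBSs share the same power and the association uses long-term averaged (fading-free) received power, the strongest RBS is the nearest one, so this is the right density to start from.

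Next I would impose the association event $P_R^{\mathrm{rx}}>P_T^{\mathrm{rx}}$. Combining Eq.~(\ref{RF_dis}) and Eq.~(\ref{THz_dis}) gives $P_R^{\mathrm{tx}}\gamma_R r_R^{-\alpha}>P_T^{\mathrm{tx}}\gamma_T\,d^{-2}\exp(-K_a d)$, where $d$ is the distance to the nearest TBS; rearranging with $Q=\frac{P_R^{\mathrm{tx}}}{P_T^{\mathrm{tx}}}\frac{\gamma_R}{\gamma_T}$ turns this into the requirement $d^{2}\exp(K_a d)>r_R^{\alpha}/Q$. The conditional PDF is then the product of the RBS contact density and the probability that no TBS is closer than the threshold $d^{\star}$ solving $d^{\star 2}\exp(K_a d^{\star})=r_R^{\alpha}/Q$, namely $\exp(-\pi\lambda_T d^{\star 2})$, all divided by $A_R$ to account for the conditioning on RBS association.

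The hard part --- and the reason the RBS case genuinely differs from \textbf{Lemma~1} --- is that the threshold equation $d^{2}\exp(K_a d)=r_R^{\alpha}/Q$ is transcendental in $d$: the molecular-absorption exponential is multiplicatively coupled to the $d^{2}$ spreading loss and cannot be inverted in closed form (contrast the TBS case, where the competing RF path loss $r_0^{\alpha}$ inverts trivially by a $1/\alpha$ power). This is precisely where the ``novel approximation'' promised in the contributions enters. I would approximate the combined THz propagation factor by an effective power law, $d^{2}\exp(K_a d)\approx d^{2+\mu}$, so that the single parameter $\mu$ folds the absorption-induced attenuation into an equivalent path-loss exponent; because $d^{2}\exp(K_a d)$ is strictly increasing, $d^{\star}$ is unique and the approximation preserves monotonicity. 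With this substitution the threshold inverts cleanly to $d^{\star}=(r_R^{\alpha}/Q)^{1/(2+\mu)}$, hence $d^{\star 2}=(r_R^{\alpha}/Q)^{2/(2+\mu)}$.

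Substituting $d^{\star 2}$ into the TBS void probability and collecting terms then yields $f_{r_R}(r_R)=\frac{2\pi\lambda_R r_R}{A_R}\exp\!\big(-\pi\lambda_R r_R^2-\pi\lambda_T(r_R^{\alpha}/Q)^{2/(2+\mu)}\big)$, the claimed expression. I expect the only non-routine element to be the justification of $\mu$ itself --- for instance via a range-matched or least-squares fit of $d^{2+\mu}$ to $d^{2}\exp(K_a d)$ over the relevant distance support, which I would present as the \emph{definition} of the approximation rather than re-derive inside the proof, noting that setting $K_a\to 0$ recovers $\mu\to 0$ and the standard two-tier distance distribution as a sanity check.
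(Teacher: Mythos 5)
Your proposal is correct and follows essentially the same route as the paper's Appendix~A: the paper writes the joint CCDF $\mathbb{P}(r_0>r_R,\,k=R)$ as an integral of the RBS contact density against the TBS void probability and differentiates, which is just the CCDF-form of your direct product of the nearest-RBS density, the factor $\exp(-\pi\lambda_T d^{\star 2})$, and the normalization $1/A_R$, and both rely on the same power-law surrogate $d^{2}\exp(K_a d)\approx d^{2+\mu}$ to invert the transcendental threshold. The only (immaterial) divergence is how $\mu$ is calibrated: you suggest a range-matched or least-squares fit, whereas the paper fixes $\mu$ by equating the exact association probability $A_R$ with its approximate counterpart $\tilde{A}_R$; in both cases this sits outside the proof proper and both recover $\mu\to 0$ as $K_a\to 0$.
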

\begin{proof}
See \textbf{Appendix~A}.
\end{proof}
The area centered by $c_2$ with radius $r_R$ is denoted by $B$.
Let $R_R$ denotes the distance between $c_1$ and RBS. When user moves to $c_1$, the HO occurs if the maximum received power of TBS is greater than that of  RF tier, i.e.,  $P_{R}^{\mathrm{rx}} < P_{T}^{\mathrm{rx}}$, which results in the following:
\begin{equation}\label{r2t_eqnv1}
    r_{T}^{2}\:\mathrm{exp}\left( K_{a}r_{T} \right) < (r_{R})^{\alpha} (1/Q).
\end{equation}
Note that the exponential term is a function of $r_{T}$; therefore, for the sake of tractability and to apply the equivalent distance approach, we approximate $r_{T}^{\mu} \approx \mathrm{exp}(K_{a}\:r_{T})$ then, $ r_{T}^{2}\:\mathrm{exp}(K_{a}\:r_{T}) \approx (r_{T})^{2+\mu} $.
Subsequently, we have
$
    r_{T} < \left[(r_{R})^{\alpha} (1/Q) \right]^{\frac{1}{2+\mu}} \triangleq r_{T}^{\prime},
$
where $\mu$ is a correcting factor and $r_{T}^{\prime}$ specifies the virtual distance of $r_{R}$. 
That is, when $r_{R}>r_{T}>r_{T}^{\prime}$, there will be no HO.

{\em \textbf{Choice of $\mu$}:} To select $\mu$ appropriately, we calculate the probability of association of the typical mobile user to RBS  by using the exact result of $A_T$ in \eqref{assoc1} and then equate it to the approximate association probability obtained as follows:
\begin{equation}
\label{eq:Asso_Probab_Proof}
\begin{split}
\tilde{A}_R & = \mathbb{E}_{r_0}\left[\mathbb{P}\left( P_{R}^{\mathrm{rx}} > P_{T}^{\mathrm{rx}} \right)\right],\\ 
& = \mathbb{E}_{r_0}\left[\mathbb{P}\left( P_{{R}}\gamma_{R} r_0^{-\alpha} > P_{{T}}\gamma_{T}\frac{\exp\left(-K_{a} d_0\right)}{d_0^2} \right)\right],\\ 
& \approx \mathbb{E}_{r_0}\left[\mathbb{P}\left( P_{{R}}\gamma_{R} r_0^{-\alpha}> P_{{T}}\gamma_{T}{d_0^{-2-\mu}} \right)\right],\\ 
& \stackrel{(a)}{=} \mathbb{E}_{r_0}\left[\exp \left(-\pi\lambda_{T}\left(\frac{r_0^\alpha}{Q}  \right)^{\frac{1}{2 +\mu}}\right)\right],\\ 
& {=} \int_{0}^{\infty} \exp \left(-\pi \lambda_{T}\left(\frac{r_0^\alpha}{Q}\right)^{\frac{2}{2+\mu}} \right) f_{r_0}(r_0)\:dr_0.
\end{split}
\end{equation}
Solving $A_R=\tilde{A}_R$ gives us the  appropriate value of $\mu$. Note that when the molecular absorption coefficient $K_a \rightarrow 0$, the value of $\mu \rightarrow 0$ and the approximation becomes exact, i.e., $\tilde A_R =1-A_T$.

\subsubsection{HO Analysis}  
When the typical mobile user is originally associated to THz tier, the HO probability from RBS can be determined by taking the average over $r_R$ and $\theta$ as follows:
\begin{align}
    \mathbb{P}(H_R) & = 1-\mathbb{P}(\overline{H}_{R})= 1-\mathbb{E}_{r_R,\theta}[\mathbb{P}(\overline{H}_{R}|r_R,\theta)].
\end{align}
\normalsize

To derive the no HO probability   ($\mathbb{P}[\bar{H}_{R}]$) of a mobile user  who is tagged to RBS, we replace the serving RBS by a virtual RBS in THz tier, therefore, the new distance is $r_{T}^{\prime}$ away from the target mobile user.  Fig. \ref{graph}(b) shows the area $B^{\prime}$ centered at $c_2$ with radius $r_{T}^{\prime}$. The fact that RF tier is acting as the serving tier to serve the mobile user indicates that all the other BSs of THz tier are situated outside $B^{\prime}$. Here, the area $A^{\prime}$ with the radius $R_{T}^{\prime}$, which is centered at $c_1$. The radius $R_{T}^{\prime}$ is the equivalent distance of $R_{R}$. If no TBSs are located in the area $A^{\prime}$, then initial RBS will remain the tagged BS even after movement of the user. There will be no HO if  no RBSs or TBSs are closer then  $r_{R}^{\prime}$. 
\begin{lemma} Given a mobile user is originally associated with a RBS, the probability of no HO from the serving RBS in a hybrid RF-THz network can  be derived as follows:
\begin{align}
    &\mathbb{P}[\bar{H}_{R}] =  \frac{1}{\pi} \left(\int_{\theta = 0}^{\frac{\pi}{2}} \int_{r_{R} = 0}^{\infty} f_{r_R}(r_R) e^{\left(-\lambda_{R} S_{R} - \lambda_{T} S_{R}^{\prime}\right)} dr_{R}\:d\theta \right.\nonumber\\
    &\left. + \int_{\theta = \frac{\pi}{2}}^{\pi} \int_{r_{R} = 0}^{v\:\mathrm{cos}(\pi - \theta)} f_{r_R}(r_R) e^{ \left(-\lambda_{R} C_{R} - \lambda_{T} C_{R}^{\prime}\right)} dr_{R}\:d\theta \right.\nonumber\\
    & \left.+ \int_{\theta = \frac{\pi}{2}}^{\pi} \int_{r_{R} = v\:\mathrm{cos}(\pi - \theta)}^{\infty} f_{r_R}(r_R) e^{ \left(-\lambda_{R} S_{R} - \lambda_{T} S_{R}^{'}\right)} dr_{R}\:d\theta
    \right),
    \nonumber
\end{align}
where $ S_{R} = |A| - |A\cap B|, S_{R}^{\prime} = |A^{\prime}| - |A^{\prime} \cap B^{\prime}|,  C_{R} = R_{R}^{2}(\pi - \theta_{1}^{\prime R}) + r_{R} \: v \: sin\theta - r_{R}^{2}(\pi - \theta),  C_{R}^{\prime} = R_{T}^{\prime 2}(\pi - \theta_{3}^{\prime R}) + r_{T}^{\prime} \: v \: \mathrm{sin}\theta_{2}^{R} - r_{T}^{\prime 2}(\pi - \theta_{2}^{R}), \theta_{1}^{\prime R} = \theta - \pi + \mathrm{sin}^{-1} \left(\frac{v\:\mathrm{sin}\theta}{R_R} \right), \theta_{3}^{\prime R} = \theta - \pi + \mathrm{sin}^{-1} \left(\frac{v\:\mathrm{sin}\theta}{R_{T}^{\prime}} \right)$.
\end{lemma}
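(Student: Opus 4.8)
The plan is to mirror the argument used for \textbf{Lemma~2} (Appendix~B), exploiting the duality between the RBS-initiated and TBS-initiated cases with the roles of the two tiers interchanged. First I would condition on the initial serving distance $r_R$, whose density $f_{r_R}(r_R)$ is supplied by \textbf{Lemma~3}, and on the movement direction $\theta$, taken uniform on $[0,2\pi)$. Conditioned on these, no HO occurs precisely when the serving RBS remains the strongest base station after the user relocates from $c_2$ to $c_1$. Using the maximum-received-power criterion together with the equivalent-distance substitution in \eqref{r2t_eqnv1} --- i.e. replacing the vertical-HO test against TBSs by the distance test $r_T < r_T' = (r_R^\alpha/Q)^{1/(2+\mu)}$ --- the no-HO event becomes the conjunction of two void events: (i) no competing RBS lies closer to $c_1$ than $R_R$, and (ii) no TBS lies closer to $c_1$ than the equivalent distance $R_T'$.

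Next I would express these two void events as empty-region probabilities for the independent PPPs $\Phi_R$ and $\Phi_T$. The initial conditioning already forces $\Phi_R$ to be empty on the disk $B$ (center $c_2$, radius $r_R$) and $\Phi_T$ to be empty on $B'$ (center $c_2$, radius $r_T'$). By Slivnyak's theorem the residual processes outside these disks remain homogeneous PPPs, so only the incremental regions matter: I need $\Phi_R$ empty on $A\setminus B$ and $\Phi_T$ empty on $A'\setminus B'$, where $A$ (center $c_1$, radius $R_R$) and $A'$ (center $c_1$, radius $R_T'$) are the post-movement disks. The void probability of a PPP then yields the factor $\exp(-\lambda_R S_R - \lambda_T S_R')$ with $S_R = |A|-|A\cap B|$ and $S_R' = |A'|-|A'\cap B'|$, exactly the exponents in the statement.

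The remaining work is the computation of the lens areas $|A\cap B|$ and $|A'\cap B'|$, which is where the three-way split over $\theta$ originates and which I expect to be the main obstacle. Because the two disk centers are separated by the step length $v$ while the radii ($R_R$ versus $r_R$, and $R_T'$ versus $r_T'$) change with the displacement, the relative configuration of the disks is not fixed: for $\theta\in[0,\tfrac{\pi}{2}]$ the simple lens formula applies and the exponent keeps the form $-\lambda_R S_R-\lambda_T S_R'$; for $\theta\in[\tfrac{\pi}{2},\pi]$ one must distinguish whether $r_R$ is small enough ($r_R<v\cos(\pi-\theta)$) that the geometry degenerates, in which case the intersection is described by the circular-segment expressions $C_R$ and $C_{R}^{\prime}$ together with the auxiliary angles $\theta_{1}^{\prime R},\theta_{2}^{R},\theta_{3}^{\prime R}$ obtained from the law of sines and cosines applied to the triangle formed by $c_1$, $c_2$, and the relevant interfering base station. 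Care is needed to keep the equivalent-distance map $R_R\mapsto R_T'$ consistent when evaluating $S_R'$ and $C_{R}^{\prime}$, since $R_T'$ inherits the $\mu$-correction introduced for tractability.

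Finally I would average the conditional void probability over $r_R$ and $\theta$. Exploiting the reflection symmetry of the step direction about the line through $c_2$ and the serving RBS, the integral over $\theta\in[0,2\pi)$ collapses to twice the integral over $[0,\pi]$, and combining this factor with the uniform density $1/(2\pi)$ of $\theta$ produces the overall prefactor $1/\pi$ (the point at which the correction factor missing in prior works enters). Splitting the $\theta\in[\tfrac{\pi}{2},\pi]$ range at $r_R=v\cos(\pi-\theta)$ according to the geometric cases above then yields the three nested integrals claimed, completing the derivation as the exact dual of Appendix~B with $(r_T,R_T,r_R',R_R')$ replaced by $(r_R,R_R,r_T',R_T')$.
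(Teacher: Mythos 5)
Your proposal follows essentially the same route as the paper's Appendix~C: condition on $r_R$ and $\theta$, invoke the equivalent-distance substitution $r_T' = (r_R^{\alpha}/Q)^{1/(2+\mu)}$ to reduce the vertical-HO test to a void event for $\Phi_T$, apply the PPP null probability over the crescent regions $A\setminus (A\cap B)$ and $A'\setminus (A'\cap B')$, compute the lens areas, and handle the $\theta\in[\pi/2,\pi]$, $r_R < v\cos(\pi-\theta)$ degeneracy via the modified angles that produce $C_R$ and $C_R'$, before averaging to get the $1/\pi$ prefactor. The only quibble is a side remark: the ``correction factor missing in prior works'' that the authors emphasize refers to the $\pi - \sin^{-1}(\cdot)$ replacement in the obtuse-angle regime (i.e., the $C_R$, $C_R'$ terms), not to the $1/\pi$ normalization, but this does not affect the validity of your derivation.
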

\begin{proof}
See \textbf{Appendix~C}.
\end{proof}
Another special case is a situation when a mobile user moves in a straight line. In this case, \textbf{Lemma~4} can be simplified by substituting $\theta=0$ as follows:
\textcolor{black}{
\begin{corollary} Given a mobile user is initially associated with a RBS and moving in a straight line, the conditional probability of no HO from the serving RBS can be given as follows:
\begin{align}\label{hop_t2r}
    \mathbb{P}[\bar{H}_{R}] = & \frac{1}{\pi}\int_{r_{R} = 0}^{\infty} f_{r_R}(r_R) \: \mathrm{exp}\left(-\lambda_{R} S_{R} - \lambda_{T} S_{R}^{\prime}\right) dr_{R} 
\end{align}
where $ S_{R} = \pi (R_{R}^{2}- r_{R}^{2}), S_{R}^{\prime} =  \pi R_{T}^{\prime 2} - r_{T}^{\prime 2}(\pi - \theta_{2}^{R}) + r_{T}^{\prime}\:v \:\mathrm{sin}{\theta_{2}^{R}},
$
$R_{R}^{2} = r_{R}^{2} + v^{2}-2r_{R} \: v \: \mathrm{cos}(\pi - \theta) = r_{R}^{2} + v^{2}+2r_{R} \: v, R_{T}^{\prime}=\left[ (R_{R})^{\alpha}\:(\frac{P_{T}^{tx}}{P_{R}^{tx}\:Q}) \right]^{\frac{1}{2+\mu}}, r_{T}^{\prime}=\left[ (r_{R})^{\alpha}\:\left(\frac{P_{T}^{tx}}{P_{R}^{tx}\:Q}\right)\right]^{\frac{1}{2+\mu}}$, 
$\theta_{2}^{R} = \mathrm{cos}^{-1} \left( \frac{r_{T}^{\prime 2} + v^2 - R_{T}^{\prime 2}}{2r_{T}^{\prime} v} \right)$.
\end{corollary}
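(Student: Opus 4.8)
The plan is to derive Corollary~5 as the straight-line ($\theta=0$) specialization of \textbf{Lemma~4}, so the whole task reduces to evaluating the two area functionals $S_R$ and $S_R'$ at this fixed direction and checking that the three-piece angular integral collapses to a single term. I would keep the conditional density $f_{r_R}(r_R)$ from \textbf{Lemma~3} untouched, since pinning the motion direction does not alter the distribution of the serving distance.

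First I would set $\theta=0$ in the expression of \textbf{Lemma~4}. This selects the first of the three integrals there, whose angular support $[0,\tfrac{\pi}{2}]$ contains $\theta=0$ and which already carries the pair $(S_R,S_R')$ over $r_R\in[0,\infty)$; the remaining two integrals, supported on $[\tfrac{\pi}{2},\pi]$, drop out. This immediately yields the single-integral form of the corollary, with the $\tfrac{1}{\pi}$ factor inherited from the direction-averaging measure.

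Next I would compute $R_R$ from the law of cosines on the triangle formed by the serving RBS, the initial point $c_2$, and the displaced point $c_1$. With $\theta=0$ we have $\cos(\pi-\theta)=-1$, giving $R_R^2=r_R^2+v^2+2r_Rv=(r_R+v)^2$. The key simplification is that the displaced circle $A$ (center $c_1$, radius $R_R$) then contains the initial circle $B$ (center $c_2$, radius $r_R$): the center separation equals $v=R_R-r_R$, so the two circles are internally tangent, $A\cap B=B$, and therefore $S_R=|A|-|B|=\pi(R_R^2-r_R^2)$, exactly as stated.

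The hard part will be the virtual-tier area $S_R'=|A'|-|A'\cap B'|$, where $A'$ and $B'$ are centered at $c_1$ and $c_2$ with equivalent THz radii $R_T'=[(R_R)^\alpha (P_T^{tx}/(P_R^{tx}Q))]^{1/(2+\mu)}$ and $r_T'=[(r_R)^\alpha (P_T^{tx}/(P_R^{tx}Q))]^{1/(2+\mu)}$. Because the equivalence map raises distances to the power $\tfrac{\alpha}{2+\mu}$, the clean tangency of the serving tier is destroyed: the centers remain separated by $v$, but $R_T'-r_T'\neq v$ in general, so $A'$ and $B'$ overlap in a genuine lens. I would obtain $|A'\cap B'|$ by the standard circle-circle intersection decomposition into circular sectors and triangles, introducing the angle $\theta_2^R$ at the center $c_2$ fixed by the law of cosines as $\theta_2^R=\cos^{-1}\!\big((r_T'^2+v^2-R_T'^2)/(2r_T'v)\big)$. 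Bookkeeping the sector and triangular contributions yields $S_R'=\pi R_T'^2-r_T'^2(\pi-\theta_2^R)+r_T'v\sin\theta_2^R$, matching the corollary. Substituting $S_R$ and $S_R'$ into the surviving integral then completes the proof; this last step is routine, and the argument parallels the TBS case of Corollary~3.
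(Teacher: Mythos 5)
Your proposal is correct and matches the paper's route: the paper obtains this corollary exactly by substituting $\theta=0$ into \textbf{Lemma~4}, whereupon $\theta_1^R=\theta_3^R=0$ in the intersection-area formulas of Appendix~C, giving $|A\cap B|=\pi r_R^2$ (your internal-tangency observation) and the stated lens form for $S_R'$. Your additional geometric justification of why $S_R$ collapses while $S_R'$ does not (the nonlinear equivalent-distance map destroying tangency) is consistent with, and slightly more explicit than, the paper's treatment.
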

}

\subsection{Overall HO Probability}
The HO probability  of a mobile user in a hybrid RF-THz network finds as follows:
\begin{equation}\label{hop}
    \mathbb{P}(H)=  A_R\mathbb{P}(H_R)+ A_T\mathbb{P}(H_T)= 1-A_R\mathbb{P}(\bar H_R)- A_T\mathbb{P}(\bar H_T),
\end{equation}
where $A_R$ and $A_T$ are given in \eqref{assoc1}. Likewise, $\mathbb{P}(\bar H_T)$ and $\mathbb{P}(\bar H_R)$ are given by \textbf{Lemma~2} and \textbf{Lemma~4}, respectively.

\section{Coverage Probability With and Without Mobility}

In this section, first we  characterize the conditional coverage probabilities from TBS and RBS, i.e., $\mathbb{C_T}$ and $\mathbb{C}_R$, respectively. Then, the probability of coverage with and without mobility will be derived. Since a mobile user can connect to either RF or THz tier, the unconditional probability of coverage  without mobility can demonstrate as follows:
\begin{equation}
\label{eq:Total-Cov}
\mathbb{C} = A_T \mathbb{C}_T +A_R \mathbb{C}_R,
\end{equation}
where $A_T$ and $A_R$ are defined in Section~III. 
Here, the TBSs and RBSs are distributed as different PPPs, therefore, the distance of a typical mobile user to its serving BS depends on the associated tier. Afterwards, the PDF of the conditional distance  of the typical mobile user to TBS and RBS can be given as in {\bf Lemma~1} and {\bf Lemma~3}, respectively.  

\subsection{Conditional Coverage Probability - THz}
Conditioned on the fact that the mobile user is tagged to TBS,  the conditional rate coverage probability is defined as the probability of this user achieving a target data rate
 $R_{\mathrm{th}}$. Using  $R_{\mathrm{th}}=W_T\mathrm{log}_2(1+\mathrm{SINR}_{T})$ (where $W_T$ {is the bandwidth for THz transmission}), the conditional rate coverage probability is given below:
\begin{align}
\label{eq:cov_prob_def}
&\mathbb{C}_{T}= \mathbb{P}\left(\mathrm{SINR}_T>2^{\frac{R_{\mathrm{th}}}{W_T}}-1\right) = \mathbb{P}\left(\mathrm{SINR}_T>\tau_T\right), \nonumber \\
&= \mathbb{P}\left(\frac{P_{T}^{\mathrm{tx}} \gamma_{T} (r_{T,0})^{-2} \left((1+\tau_T)\: e^{\left(-K_{a}\:r_{T,0}\right)} - \tau_T\right)}{N_0 + \sum_{i\epsilon \Phi \backslash 0 } P_{T}^{\mathrm{tx}} \gamma_T F {(r_{T,i})^{-2}}}>\tau_T\right).
\end{align}
Taking 
$S(r_{T,0})=(1+\tau_T)P_{T}^{\mathrm{tx}} \gamma_{T} (r_{T,0})^{-2}\: \mathrm{exp}\left(-K_{a}r_{T,0}\right) - P_{T}^{\mathrm{tx}} \gamma_T (r_{T,0})^{-2} \tau_T,$ 
where, $I_{T}^{\mathrm{agg}}=\sum_{i\epsilon \phi \backslash 0 } P_{T}^{\mathrm{tx}} \gamma_T F {(r_{T,i})^{-2}}$ is the cumulative interference at the typical mobile user and applying Gil-Pelaez inversion theorem, Eq. (\ref{eq:cov_prob_def}) can be rewritten as follows:
\begin{align} \label{eq:cov_prob_def_2ndStep}
    \mathbb{C}_T & = \mathbb{P}\left( \frac{S(r_{T,0})}{N_0 + I_{T}^{\mathrm{agg}}}>\tau_T \right), \nonumber\\
    & = \mathbb{P}\left( S(r_{T,0})>\tau_T N_0 +\tau_T I_{T}^{\mathrm{agg}}\right), \nonumber\\
    & = \mathbb{E}_{r_{T,0}}\left[ \frac{1}{2} -\frac{1}{\pi} \int_{0}^{\infty} \frac{\mathrm{Im}[\phi_{\Omega|r_{T,0}}(\omega) \: \mathrm{exp}\left(j\omega\tau_T N_0 \right)]}{\omega} d\omega \right], 
\end{align}
where Im$(\cdot)$  {denotes} the imaginary operator, $\Omega=S(r_{T,0})-\tau_{T} I_{\mathrm{agg}}^{T}$, and $\phi_{\Omega}(\omega)=\mathbb{E}[ \mathrm{exp}\left(-j\omega \Omega\right)]$  {denotes} the characteristic fuction (CF) of $\Omega$ {can be stated} as follows:
\begin{align}
    \phi_{\Omega|r_{T,0}}(\omega)=\mathrm{exp}\left(-j\omega S(r_{T,0})\right) \mathcal{L}_{I_{T}^{\mathrm{agg}}|r_{T,0}}(-j\omega \tau_T).
\end{align}
where $\mathcal{L}_{I_{\mathrm{agg}}^{T}|r_T}$ is the LT of the cumulative interference conditioned on $r_{T,0}$ and is derived in the following Lemma. 

\begin{lemma} The LT of the cumulative interference can be given as follows:
\begin{equation}
    \mathcal{L}_{I_{T}^{\mathrm{agg}}}(s)
= \exp\Biggl(2\pi\lambda_{T}\sum_{l=1+\epsilon}^{\infty}\frac{\left(-sF \gamma_{T} P_{\mathrm{T}}^{\mathrm{tx}}\right)^{l}}{\left(2l-2\right)l!}\cdot \frac{1}{(r_{T,0})^{2l-2}} \Biggr), \nonumber
\end{equation}
where \textcolor{black}{$F=(\theta_{\mathrm{tx}}\:\theta_{\mathrm{rx}})/4\pi^2$} is the probability of main-lobe alignment of the interferers and the typical user and given negligible side lobe gains.
\end{lemma}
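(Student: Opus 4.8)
The plan is to compute the conditional Laplace transform $\mathcal{L}_{I_{T}^{\mathrm{agg}}}(s) = \mathbb{E}\!\left[\exp(-sI_{T}^{\mathrm{agg}})\right]$ directly from the probability generating functional (PGFL) of the interfering TBS process. The first step I would record is the cancellation that produced the aggregate: combining the pure interference $I_T$ with the molecular-absorption-noise generated by the interferers, the factors $e^{-K_a d_i}$ and $(1-e^{-K_a d_i})$ add to unity, so that $I_{T}^{\mathrm{agg}} = \sum_{i\in\Phi_T\setminus 0} P_{T}^{\mathrm{tx}}\gamma_T F\,(r_{T,i})^{-2}$, which explains why $K_a$ is absent from the claimed expression. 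Conditioning on the serving distance $r_{T,0}$ and using the maximum-received-power association (monotone decreasing in distance for the THz link), the interfering TBSs form a homogeneous PPP of intensity $\lambda_T$ on $\mathbb{R}^2\setminus B(0,r_{T,0})$. Treating the alignment factor $F$ as the deterministic effective value already substituted for the random gain, the PGFL gives
$$\mathcal{L}_{I_{T}^{\mathrm{agg}}}(s) = \exp\left(-2\pi\lambda_T \int_{r_{T,0}}^{\infty}\left(1 - e^{-s P_{T}^{\mathrm{tx}}\gamma_T F\, r^{-2}}\right) r\,dr\right).$$

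Since the remaining radial integral has no elementary closed form, I would expand the exponential in its Maclaurin series, writing $1 - e^{-x} = -\sum_{l\geq 1}\frac{(-x)^l}{l!}$ with $x = s P_{T}^{\mathrm{tx}}\gamma_T F\,r^{-2}$, then interchange summation and integration. The term-by-term integrals are elementary, $\int_{r_{T,0}}^{\infty} r^{1-2l}\,dr = \frac{1}{(2l-2)\,(r_{T,0})^{2l-2}}$, and these converge precisely for $l\geq 2$. Collecting the coefficients $(-sF\gamma_T P_{T}^{\mathrm{tx}})^l/\big((2l-2)\,l!\big)$ then reproduces the stated series inside the exponential.

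The main obstacle is the $l=1$ term: for inverse-square path loss in the plane the integral $\int_{r_{T,0}}^{\infty} r^{-1}\,dr$ diverges, which reflects the well-known fact that the mean aggregate interference is infinite once the molecular-absorption damping has been folded into the noise. I would handle this exactly as the statement does, by excluding the $l=1$ contribution --- this is what the lower summation index $l=1+\epsilon$ encodes, i.e.\ the sum effectively begins at $l=2$ --- and argue that the discarded term is regularized away (or negligible in the operating regime). The remaining technical point is to justify the interchange of sum and integral, which I would establish by dominated/monotone convergence applied to the tail-convergent terms; once this is in place, the identification with the claimed expression is immediate.
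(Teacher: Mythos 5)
Your proposal is correct and follows essentially the same route as the paper's own proof: conditioning on $r_{T,0}$, applying the PGFL of the PPP outside $B(0,r_{T,0})$, expanding $1-e^{-x}$ in its Maclaurin series, integrating term by term to obtain $\tfrac{1}{(2l-2)(r_{T,0})^{2l-2}}$, and interpreting the lower index $l=1+\epsilon$ as excluding the divergent $l=1$ term (the paper simply inserts $\epsilon=0.01$ "to avoid the indeterminate term"). Your additional observations --- the cancellation of the $e^{-K_a d_i}$ factors between interference and interferer-induced molecular noise, and the need to justify the sum--integral interchange --- are consistent with, and somewhat more careful than, the paper's presentation.
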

\begin{proof}
Starting from the definition of LT, we have:
\begin{equation*}
\label{eq:Laplace_Trans-Proof} 
\begin{split} 
&\mathcal{L}_{I_{T}^{\mathrm{agg}}}(s)  = \mathbb{E}_{\Phi_{T}}\left[\mathrm{exp}\left(-s I_{T}^{\mathrm{agg}}\right)\right] 
\\&
= \mathbb{E}_{\Phi_{T}}\Biggl[\mathrm{exp}\left(-sF \sum_{i \in \Phi_{T} \backslash 0 } P_{T}^{\mathrm{tx}} \gamma_T {(r_{T,i})^{-2}} \right)\Biggr],
\\ 
& = \mathbb{E}_{\Phi_{T}}\Biggl[\prod_{i \in \Phi_{T}\backslash 0}\exp\biggl(-sF P_{T}^{\mathrm{tx}} \gamma_{T} (r_{T,i})^{-2}\Biggr)\Biggr],
\\
& \stackrel{(a)}{=} \exp\Biggl(-2\pi\lambda_{T}\int_{r_{T,0}}^{\infty}r_{T,i}\left(1- \exp{\left(-sF \gamma_{T} P_{\mathrm{T}}^{\mathrm{tx}} (r_{T,i})^{-2}\right)}\right)d r_{T,i}\Biggr),
\\&\stackrel{(b)}{=}
\exp\Biggl(2\pi\lambda_{T}\int_{r_{T,0}}^{\infty}\sum_{l=1+\epsilon}^{\infty}\frac{\left(-sF \gamma_{T} P_{\mathrm{T}}^{\mathrm{tx}} \right)^l }{(r_{T,i})^{2l-1} l!} 
dr_{T,i}\Biggr), \\ 
&\stackrel{(c)}{=}\exp\Biggl(2\pi\lambda_{T}\sum_{l=1+\epsilon}^{\infty}\frac{\left(-sF \gamma_{T} P_{\mathrm{T}}^{\mathrm{tx}}\right)^{l}}{\left(2l-2\right)l!}\: \frac{1}{(r_{T,0})^{2l-2}} \Biggr),
\end{split}
\end{equation*}
\normalsize
where  {using probability generating functional (\textbf{PGFL})  $f(x)=\exp\left(-sP_{T}h\left(r_{T,i}\right)\right)$ step (a) has obtained,}
(b) is {obtained} by using $\exp\left(-x\right)=\sum_{i=0}^{\infty}(-1)^{i}\frac{x^{i}}{i!}$ (\cite{gradshteyn2014table}, Eq. 1.211) and $\epsilon =0.01$ is inserted to avoid the indeterminate term. Since the {mobile} user has {maintained} a distance $r_{T,0}$ from its {tagged} TBS, all interferers are beyond   $r_{T,0}$, which is  the lower limit of the integral. 
\end{proof}
\subsection{Conditional Coverage Probability - RF}
The conditional probability of coverage of the {mobile} user, which is tagged to RBS can be derived in an interference limited regime  as follows \cite{9119462}:
\begin{align}\label{CCRF}
    \mathbb{C}_{R} & = \mathbb{P} \left(\frac{P_{R}^{tx} \gamma_{R} H}{(r_{R,0})^{\alpha} I_{R}^{\mathrm{agg}}} > \tau_{R} \right) \nonumber\\
    &= \mathbb{P} \left(H > \tau_{R} (P_{R}^{tx})^{-1} \gamma_{R}^{-1} (r_{R,0})^{\alpha} I_{R}^{\mathrm{agg}} \right),  \nonumber\\
    & = \mathbb{E}\left[ \exp(-\tau_{R} (P_{R}^{tx})^{-1} \gamma_{R}^{-1} (r_{R,0})^{\alpha} I_{R}^{\mathrm{agg}}) \right], \nonumber\\
    & = \int_{0}^{\infty} \mathcal{L}_{I_{R}^{\mathrm{agg}}} \left( \tau_{R} (P_{R}^{tx})^{-1} \gamma_{R}^{-1} (r_{R,0})^{\alpha} \right) f_{r_{R,0}}(r_{R,0}) d r_{R},
\end{align}
where $ \mathcal{L}_{I_{R}^{\mathrm{agg}}}\left( \tau_{R} (P_{R}^{tx})^{-1} \gamma_{R}^{-1} (r_{R,0})^{\alpha} \right) = \mathrm{exp}\left( -\pi (r_{R,0})^{2} \lambda_R \mathcal{Y}(\tau_R, \alpha)\right)$. Here, $\mathcal{Y}(\tau_R, \alpha) = \frac{2\tau_R}{\alpha-2} {}_2 F_1 [1,1-\frac{2}{\alpha};2-\frac{2}{\alpha};-\tau_R],$ and $ {}_2 F_1[\cdot] $ is the Gauss's Hypergeometric function.

\subsection{Coverage Probability With and Without Mobility}
The overall coverage probability without mobility in a hybrid RF-THz network is given by substituting the conditional coverage probability results given in \eqref{eq:cov_prob_def_2ndStep} and \eqref{CCRF} into \eqref{eq:Total-Cov}.
The overall coverage probability with mobility is a function of HO probability. 
The coverage degrades with the higher HO probability, service delays, and dropped calls. The overall coverage with mobility can then be modeled as follows~\cite{9023398}: 
\begin{align} \label{overall_CP}
    \mathbb{C}_M  = \mathbb{C} \left(1 - \eta \: \mathbb{P}(H) \right),
\end{align}
\textcolor{black}{The coefficient $\eta$, in effect, measures the system sensitivity to HOs. Its value depends on a number of factors, e.g., the radio access technology, the mobility protocol, the protocol’s layer of operation and the link speed. At one extreme, as $\eta \rightarrow 0$,  there is no HO cost and system HO failures do not happen. On the other hand, as $\eta \rightarrow 1$, every HO results in an outage.} Note that $\mathbb{C}$ is the total coverage probability of a typical {mobile} user in a hybrid RF-THz network without mobility given by \eqref{eq:Total-Cov}  and $\mathbb{P}(H)$ is the overall HO probability {for that mobile} user in a hybrid RF-THz network given by \eqref{hop}. 

{\bf Remark:} As a special case, the overall coverage probability without mobility in a stand-alone THz network can be given simply by averaging \eqref{eq:cov_prob_def_2ndStep} over $d_0$ instead of $r_{T,0}$. 

Furthermore, in the case of noise-limited regime, i.e., when the interference is negligible, the coverage probability in a stand-alone THz network can be simplified as shown in the following.

\begin{corollary} In the noise-limited regime (in scenarios where the intensity of TBSs is low), the coverage probability of a typical user can be simplified as follows:
\begin{align} \label{eq:cov_prob_def_2ndStep}
    &\mathbb{C}_T 
     = \mathbb{P}\left( S(r_{T,0})>\tau_T N_0 \right), \nonumber\\
    & \stackrel{(a)}{=} \mathbb{E}_{r_{T,0}}\left[ \frac{1}{2} -\frac{1}{\pi} \int_{0}^{\infty} \frac{\mathrm{Im}[\mathrm{exp}\left(-j\omega (S(r_{T,0})-\tau_T N_0 \right)]}{\omega} d\omega \right],  \nonumber\\
    & =   \frac{1}{2} +\frac{1}{\pi} \mathbb{E}_{r_{T,0}}[\int_{0}^{\infty} \frac{\mathrm{sin}\left(\omega (S(r_{T,0})-\tau_T N_0 \right)}{\omega} d\omega] ,
\end{align}
where (a) follows from Euler's identity, i.e., $e^{-j \theta} =\mathrm{cos}\theta -j \mathrm{sin} \theta$.
\end{corollary}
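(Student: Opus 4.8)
The plan is to obtain this corollary as a direct specialization of the general conditional coverage expression for $\mathbb{C}_T$ established earlier (the Gil-Pelaez form involving the conditional characteristic function $\phi_{\Omega|r_{T,0}}$), rather than re-deriving anything from scratch. First I would invoke the noise-limited assumption, namely that the aggregate interference $I_{T}^{\mathrm{agg}}$ is negligible and can be set to zero. Substituting $I_{T}^{\mathrm{agg}}=0$ into the event $\{S(r_{T,0})>\tau_T N_0 + \tau_T I_{T}^{\mathrm{agg}}\}$ immediately collapses it to $\{S(r_{T,0})>\tau_T N_0\}$, which establishes the first line of the claim.

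Next, I would track how this assumption propagates through the characteristic-function machinery already set up for the interference-limited case. Since $I_{T}^{\mathrm{agg}}=0$ deterministically, its Laplace transform satisfies $\mathcal{L}_{I_{T}^{\mathrm{agg}}|r_{T,0}}(s)=\mathbb{E}[e^{-s\cdot 0}]=1$ for every argument. Consequently the conditional characteristic function $\phi_{\Omega|r_{T,0}}(\omega)=\exp(-j\omega S(r_{T,0}))\,\mathcal{L}_{I_{T}^{\mathrm{agg}}|r_{T,0}}(-j\omega\tau_T)$ reduces to the pure phase $\exp(-j\omega S(r_{T,0}))$. Plugging this into the Gil-Pelaez inversion, the factor $\exp(j\omega\tau_T N_0)$ combines with the phase to give $\exp(-j\omega(S(r_{T,0})-\tau_T N_0))$, which is exactly step (a).

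Finally, I would extract the imaginary part using Euler's identity. Writing $\exp(-j\omega(S(r_{T,0})-\tau_T N_0))=\cos(\omega(S(r_{T,0})-\tau_T N_0))-j\sin(\omega(S(r_{T,0})-\tau_T N_0))$ gives $\mathrm{Im}[\cdot]=-\sin(\omega(S(r_{T,0})-\tau_T N_0))$; the minus sign then flips the sign of the integral term and produces the $+\tfrac{1}{\pi}$ sine integral in the final line, after which I would pull the expectation over $r_{T,0}$ outside the integral (justified by Fubini, since the integrand is bounded and $r_{T,0}$ has a proper density).

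There is no substantial obstacle here: the result is an essentially mechanical specialization, and the only points requiring care are the bookkeeping of the sign conventions in the author's characteristic-function definition $\phi_\Omega(\omega)=\mathbb{E}[e^{-j\omega\Omega}]$ and the interchange of expectation and integration. As an optional sanity check I would note that the Dirichlet integral $\int_0^\infty \frac{\sin(a\omega)}{\omega}\,d\omega=\frac{\pi}{2}\,\mathrm{sgn}(a)$ collapses the final expression back to $\mathbb{P}(S(r_{T,0})>\tau_T N_0)$, confirming internal consistency with the first line.
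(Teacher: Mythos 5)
Your proposal is correct and follows essentially the same route as the paper: set $I_{T}^{\mathrm{agg}}=0$ so the Gil--Pelaez characteristic function reduces to the pure phase $\exp\left(-j\omega S(r_{T,0})\right)$, combine it with $\exp\left(j\omega\tau_T N_0\right)$, and apply Euler's identity so that $\mathrm{Im}\left[\cdot\right]=-\sin\left(\omega\left(S(r_{T,0})-\tau_T N_0\right)\right)$ flips the sign of the integral term. Your closing sanity check via the Dirichlet integral is a nice addition not present in the paper but changes nothing substantive.
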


\subsection{Extension to Incorporate Misalignment}
\textcolor{black}{The misalignment errors in the desired signal can be incorporated with the path-loss as a normal random variable \cite{ekti2017statistical} with zero mean and finite variance. Let the LT of misalignment variable $\chi$  is given by $\mathcal{L}_\chi(\cdot)$, we can update the coverage probability calculation  as follows: 
  \begin{align} \label{eq:cov_prob_def_2ndStep}
    &\mathbb{C}_T  = \mathbb{P}\left( \frac{S(r_{T,0}) \chi}{N_0 + I_{T}^{\mathrm{agg}}}>\tau_T \right)
     = \mathbb{P}\left( S(r_{T,0} ) \chi >\tau_T N_0 +\tau_T I_{T}^{\mathrm{agg}}\right), \nonumber\\
    & = \mathbb{E}_{r_{T,0}}\left[ \frac{1}{2} -\frac{1}{\pi} \int_{0}^{\infty} \frac{\mathrm{Im}[\phi_{\Omega|r_{T,0}}(\omega) \: \mathrm{exp}\left(j\omega\tau_T N_0 \right)]}{\omega} d\omega \right], 
\end{align}
 where $\Omega=S(r_{T,0}) \chi -\tau_{T} I_{\mathrm{agg}}^{T}$ and 
$
    \phi_{\Omega|r_{T,0}}(\omega)=
    \mathcal{L}_{\chi}(j\omega S(r_{T,0}))
    \mathcal{L}_{I_{T}^{\mathrm{agg}}|r_{T,0}}(-j\omega \tau_T).
$
}
\subsection{Extension to Incorporate Blockages}
\textcolor{black}{Along the lines of \cite{6840343}, the blockages can be modeled as a thinning process in the stochastic geometry models where the transmissions from a specific number of BSs are considered as blocked.    Similar to \cite{6840343}, we consider a Boolean blockage model in which obstacles are rectangles and are distributed following a homogeneous PPP of density $\lambda_{B}$. The rectangles length ($L_{k}$) and width ($W_{k}$) are independent and identically distributed, and their probability density functions are $f_{L}(x)$ and $f_{W}(x)$, respectively. The orientation of these rectangles is distributed uniformly in $\left[0,2\pi\right)$. Then, according to \cite{6840343}, the number of blockages in a link of length $r_{T,0}$ is a random variable with Poisson distribution having mean $\xi r_{T,0}+p$, where $\xi = \frac{2\lambda_{B}\left(E\{W\}+E\{L\}\right)}{\pi}$ and $p = \lambda_{B}E\{W\}E\{L\}$, where $0<p<1$ represents the area which is under blockages. Thus, the LOS probability is $p_{\mathrm{LOS}}(r_{T,0}) = \exp{\left[-\left(\xi r_{T,0} + p\right)\right]}$ and this factor can be multiplied with the coverage probability  to consider the impact of  blockages.
 Subsequently, the coverage probability $\mathbb{C}_T$ now depends on two events, i.e, \textbf{(i)} the LOS link is not blocked; and \textbf{(ii)} $\mathrm{SINR}_T$ is greater than $\tau_T$. For a given distance $r_{T,0}$,  $\mathbb{C}_T$ can be expressed as in \eqref{eq:cov_prob_def_2ndStep},
 \begin{figure*}
   \begin{align} \label{eq:cov_prob_def_2ndStep}
    \mathbb{C}_T & = p_{\mathrm{LOS}}(r_{T,0}) \, \mathbb{P}\left( \frac{S(r_{T,0})}{N_0 + I_{T}^{\mathrm{agg}}}>\tau_T \right), \nonumber\\
    & = \mathbb{E}_{r_{T,0}}\left[ \frac{p_{\mathrm{LOS}}(r_{T,0})}{2} -\frac{1}{\pi} \int_{0}^{\infty} p_{\mathrm{LOS}}(r_{T,0}) \frac{\mathrm{Im}[\phi_{\Omega|r_{T,0}}(\omega) \: \mathrm{exp}\left(j\omega\tau_T N_0 \right)]}{\omega} d\omega \right], 
\end{align}
\hrule
\end{figure*}
 where $\Omega=S(r_{T,0}) -\tau_{T} I_{\mathrm{agg}}^{T}$.}
 
 \subsection{Extension to Incorporate Ping-Pong Effect}
 \textcolor{black}{Typically, in wireless networks, the handover is initiated if the received signal power of another BS becomes better than the current serving BS by at least by a predefined  factor.  To consider the ping-pong effect, we can incorporate a  constant bias (hysteresis factor) $\eta_H$, i.e., by scaling the received signal power of the serving BS (whether RF or THz BS)  in Section-II.C in a straight-forward manner. For higher values of $\eta_H$, the handoff probability (and thus ping-pong effect) will reduce. However, optimizing this factor is beyond the scope of this article and is an interesting future research direction. }
  
\section{Numerical Results and Discussions}
In this section, {the derived expressions has been validated by} Monte-Carlo simulations {with the consideration of} a hybrid RF-THz network. Our results extract useful insights related to the probability {of coverage} of a moving user {within} a hybrid RF-THz network considering the impact of molecular noise in THz transmission, intensity of TBSs, desired rate requirement, and velocity of a user. \textcolor{black}{We have a general mobility model where the typical user can have any arbitrary trajectory with random distances and directions at each movement step. }

\subsection{Simulation Parameters}
Unless stated otherwise, the  users and BSs are located within a radius of 500 m circular region. The transmit power from TBS is 0.2~W and intensity of TBS is 0.0001 $\mathrm{BSs/ m^{2}}$. The transmit and receive antenna gains (i.e., $G_{T}^{\mathrm{tx}}$ and $G_{T}^{\mathrm{rx}}$) of TBSs are considered as 25 $\mathrm{dB}$.  Desired target rate  is 1 $\mathrm{Gbps}$ and THz transmission bandwidth is taken as 0.5 $\mathrm{GHz}$.  On the other hand, the transmit power from RBS is 2 W, where its transmission frequency is 2 $\mathrm{GHz}$ and transmission bandwidth is 40 $\mathrm{MHz}$. Here, {the exponent of} path loss  $\alpha $ is 4, and the intensity of RBSs is 0.00001 $\mathrm{ BSs/ m^{2}}$. \textcolor{black}{First three terms of Lemma~5 provide a good approximation and are used to compute numerical results.}
The simulation parameters to compute $K_a(f)$ are listed in Table~II.

\begin{table*}
\footnotesize
  \centering
  \caption{Simulation Parameters for Calculating $K_a(f)$  \cite{5995306}}
    \label{MOLECULAR ABSORPTION PARAMETERS}
    \begin{tabular}{| p{2.375cm} | p{3.7cm} | p{2cm} |p{4.5cm} |}
     \hline 
     \textbf{Symbol}  &\textbf{Value}   &\textbf{Symbol}& \textbf{Value}
    \\\hline
    $p_0, p$  &1 atm, 1 atm & $q^{(i,g)}$&0.05 [\%]\\\hline
    $T_0,T$ & 296~K, 396~K  &$k_b$ &1.3806$\times 10^{-23}$ J/K\\ \hline
    $f_{ {c_0}}^{(i,g)}$ &276 Hz & $T_{\mathrm{sp}}$ &273.15 K\\\hline
    $\gamma$ &0.83 & $N_A$ & 6.0221 $\times 10^{23}$\\   \hline
    $S^{(i,g)}$ &2.66$^{-25}$Hz-m$^2/$mol & h &6.6262$\times 10 ^{-34}$ J s\\\hline
    $\alpha_0^{(i,g)}, \alpha_{\mathrm{air}}^{(i,g)}$ &0.916Hz, 0.1117Hz & c&2.9979 $\times 10^8$ m/s \\\hline
    ${\delta}^{{(i,g)}}$ &0.0251 Hz & V &8.2051$\times 10^{-5}$m$^3$atm/K/mol\\
    \hline
  \end{tabular}
\end{table*}
\

\begin{figure*}
\centering
\begin{tabular}{lccccc}
\includegraphics[width=0.5\textwidth]{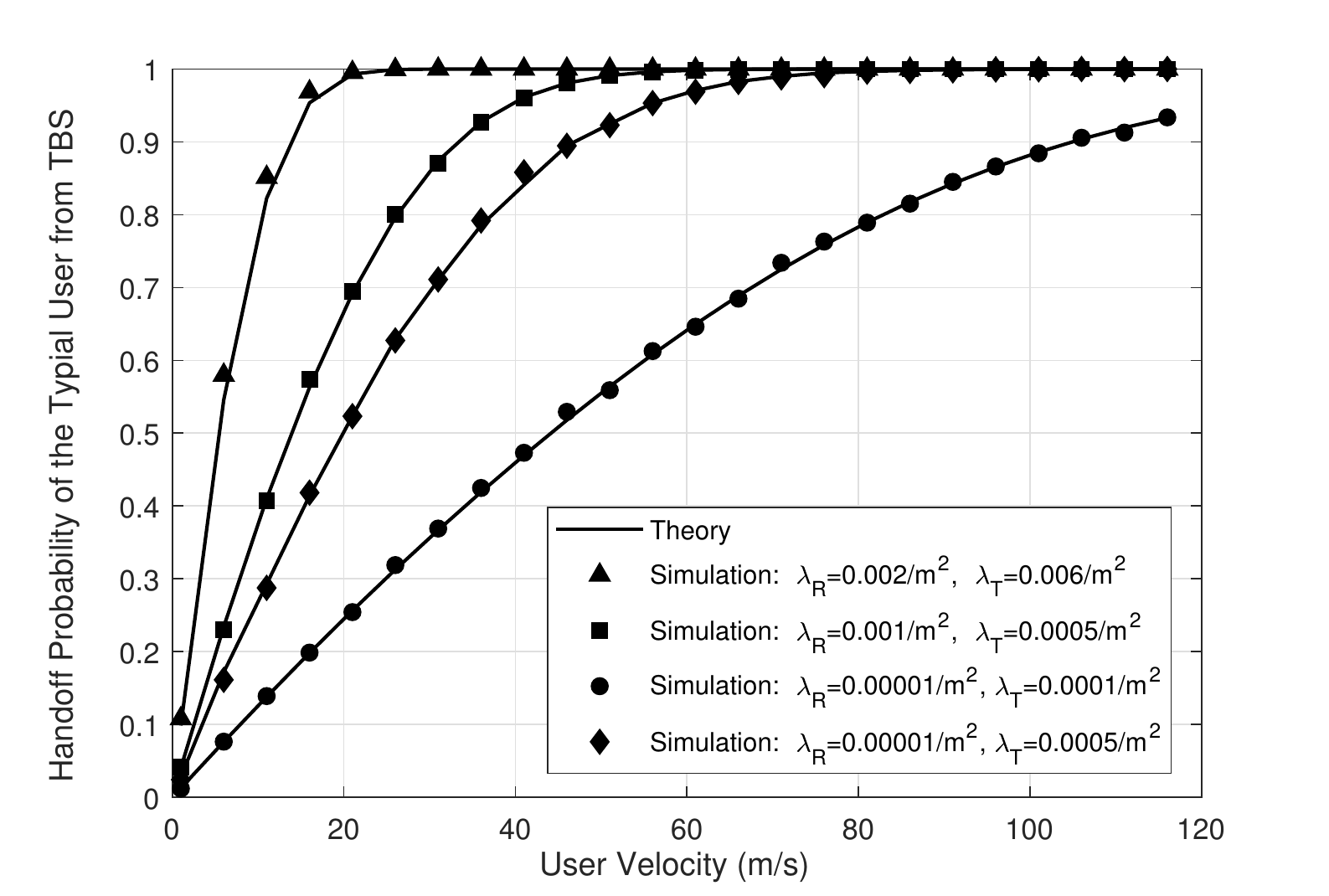}
&\hspace{-1cm}
{\includegraphics[width=0.5\textwidth]{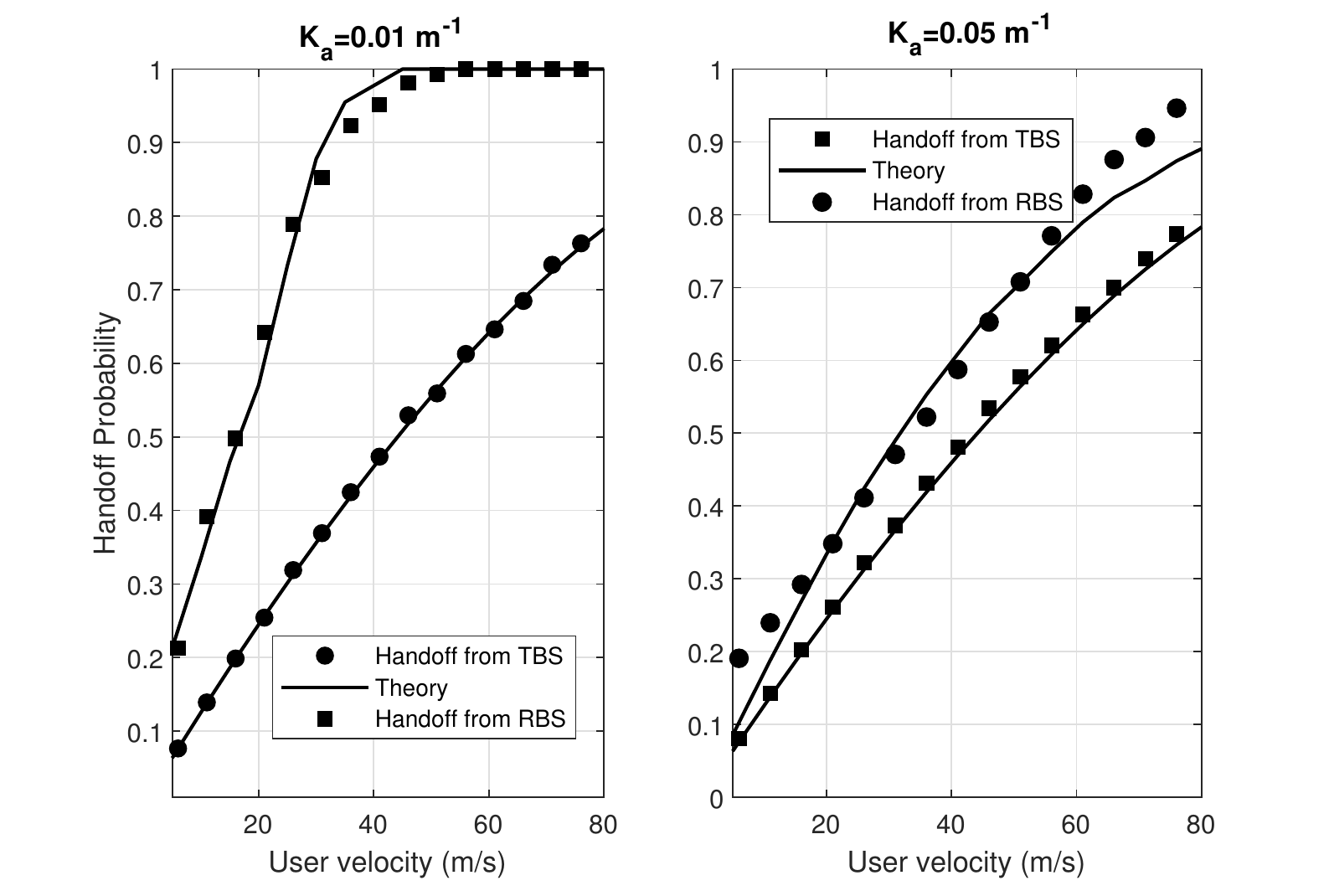}\label{fig:f3}}
&\hspace{-1cm}
\qquad \qquad\qquad \qquad (a)  & (b) 
\end{tabular}
\caption {(a) HO probability of a typical user from TBS as a function of the velocity and intensity of TBSs, $K_a=0.01$m$^{-1}$. (b) HO probability of a typical user as a function of the velocity and molecular absorption in a hybrid RF-THz network, $\lambda_R=0.00001$ per m$^{2}$ and $\lambda_T=0.0001$ per m$^{2}$.} 
\label{fig1}
\end{figure*}

\begin{figure}[!tbp] 
  \centering
{\includegraphics[width=0.5\textwidth]{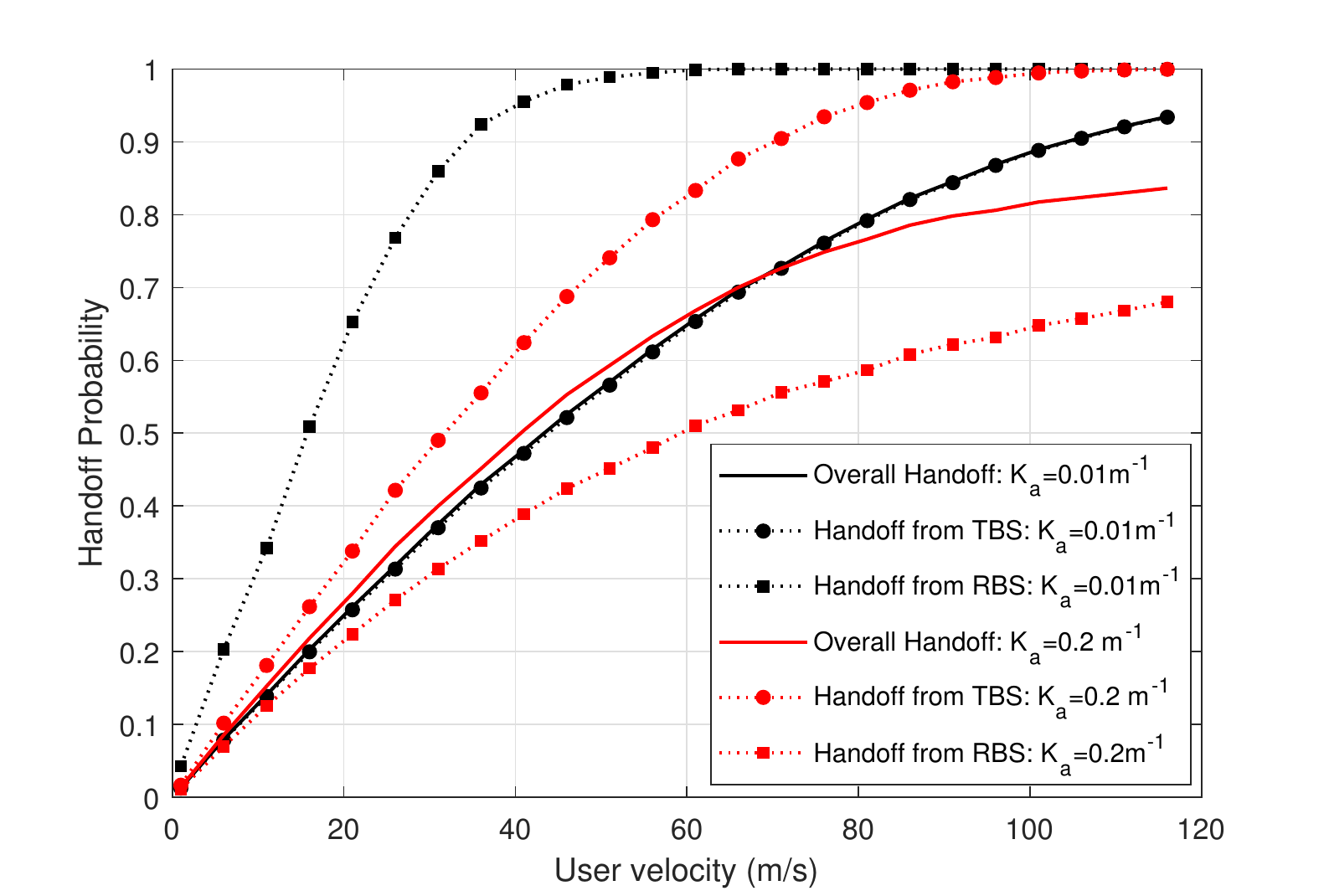}\label{fig:f2}}
  \caption{Overall HO probability of a typical user as a function of the velocity and molecular absorption in a hybrid RF-THz network, $\lambda_R=0.00001$ per m$^{2}$ and $\lambda_T=0.0001$ per m$^{2}$.}
  \label{sim_theory2}
\end{figure}

\subsection{Results and Discussions}
Fig.~3(a) depicts the HO probability {calculated for} a typical {mobile} user who is {tagged} to TBS  as a function of its velocity and intensity of {the} RBSs and TBSs. We compare the accuracy of our numerical results with the corresponding simulation results and note that our analytical results match perfectly with the Monte-Carlo simulations. The molecular absorption coefficient is set as $K_a = 0.01$m$^{-1}$. We observe that the HO probability increases with the increase in  the number of RBSs; however, the increase is much more significant with the increase in TBSs. That is, we note that increasing $\lambda_T$ from  $ 0.0001$ to $ 0.0005$ per m$^2$ [i.e., almost 5 times] at   $\lambda_R = 0.00001$ per m$^2$  increases the HO probability much more compared to the case when $\lambda_R$ increases from  $ 0.00001$ to $ 0.001$ per m$^2$ [i.e., almost 100 times] at   $\lambda_T = 0.0005$ per m$^2$. This signifies the impact of severe molecular absorption and small coverage zones on the connectivity of mobile users transmitting  in a hybrid RF-THz network. 

{Fig.~3(b) depicts the impact of user's velocity on the conditional HO probability from TBS and conditional HO probability from RBS considering  two different absorption coefficients, i.e., $K_a = 0.01$ m$^{-1}$ and $K_a = 0.05$ m$^{-1}$.  For lower values of $K_a$, the  HO probability is much higher if the user was initially associated to RBS compared to the case if the user was initially connected to TBS. Also, in this case, the HO probability increases much rapidly with the increase in velocity. The reason is the high-received signal power from TBS compared to RBS generates more BS-switching.  Conversely, for higher molecular absorption (i.e., $K_a = 0.05$ m$^{-1}$), we note an opposite trend. That is, the  HO probability is much higher if the user was initially associated to  TBS compared to the case when the user was initially connected to RBS. The reason is the severe molecular absorption which degrades the received signal from TBS and favors shifting users from TBS to RBS.}
Finally, it can be observed that as the molecular absorption coefficient $K_a \rightarrow 0$, our analytical results matches perfectly with the Monte-Carlo simulations. Conversely, for of $K_a=0.05$, the impact of approximation can be observed clearly. Note that the HO from TBS is exact and the approximation is only in the HO from RBS.

Fig. \ref{sim_theory2} has been plotted by keeping the intensity of BSs fixed  (i.e., $\lambda_R = 0.00001$ per m$^2$ and $\lambda_T = 0.0001$ per m$^2$) and tracking the variation of the HO probability from TBS ($H_T$), HO probability from RBS  ($H_R$), and overall HO probability of a typical mobile user with different molecular absorption coefficients. In the figure, the black curves consider $K_a = 0.01$~m$^{-1}$, whereas the red curves assume  $K_a = 0.2$ m$^{-1}$. It can be observed that for low molecular absorption coefficient, the  HO probability from RBS is much higher. The reason is that the lower molecular absorption favours association with TBS due to higher received powers.   Conversely, when the molecular absorption coefficient is high,  the  HO probability from TBS is much higher, i.e., for the user who is initially associated with TBS.  The reason is that the impact of molecular absorption is devastating; therefore, the criterion favours association with RBS.  Nevertheless, the overall HO probability remains nearly the same which highlights the significance of computing and extracting insights from ($H_T$) and  ($H_R$) separately.


\begin{figure}[tbp!]
  \centering
  {\includegraphics[width=0.5\textwidth]{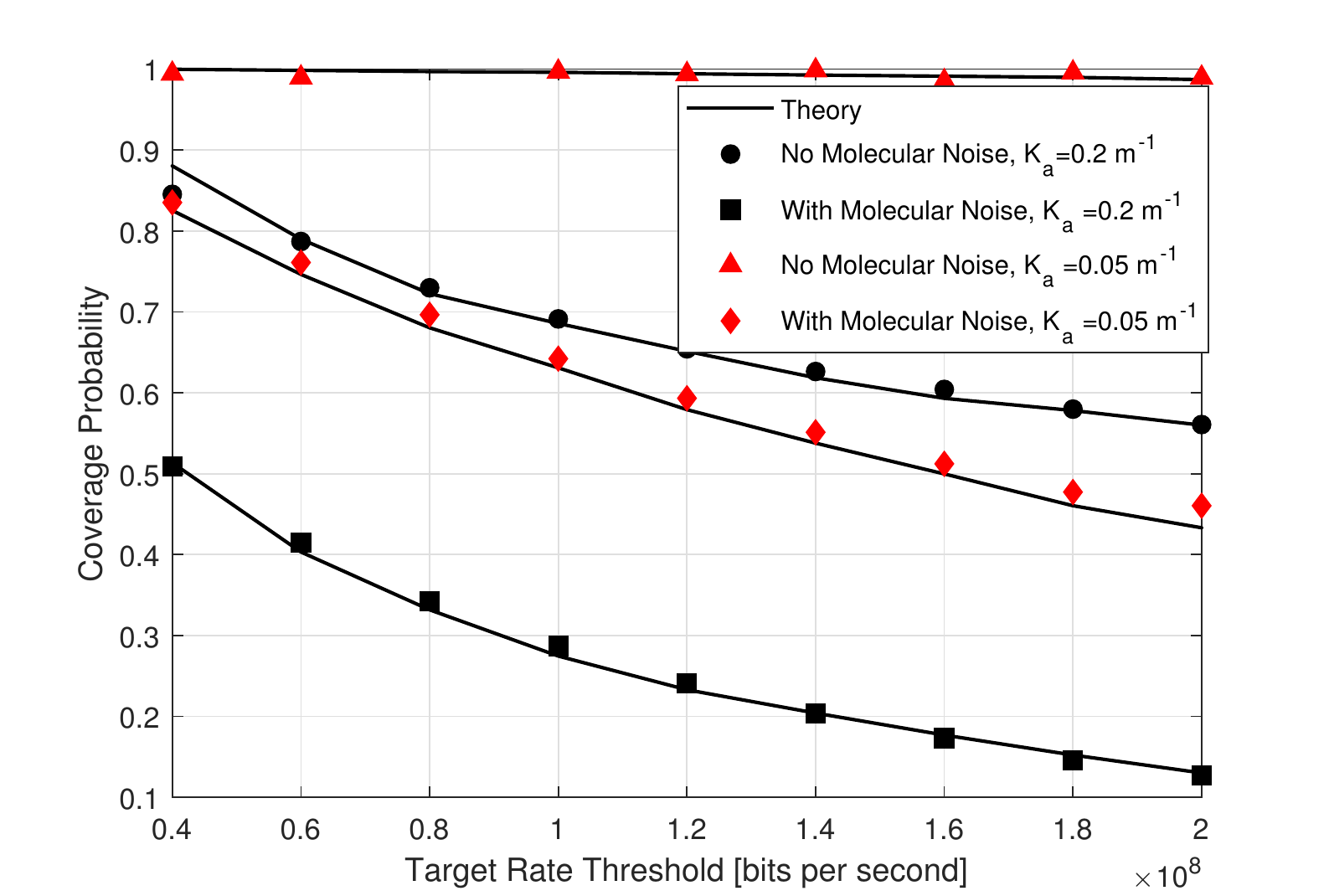}\label{fig:f4}}
  \caption{Coverage probability of a typical user as a function of the  target rate threshold  (in bps), $\lambda_R=0.00001$ per m$^{2}$ and $\lambda_T=0.0001$ per m$^{2}$.}
  \label{CPvsRate}
\end{figure}

\begin{figure}[tbp!]
  \centering
  {\includegraphics[width=0.5\textwidth]{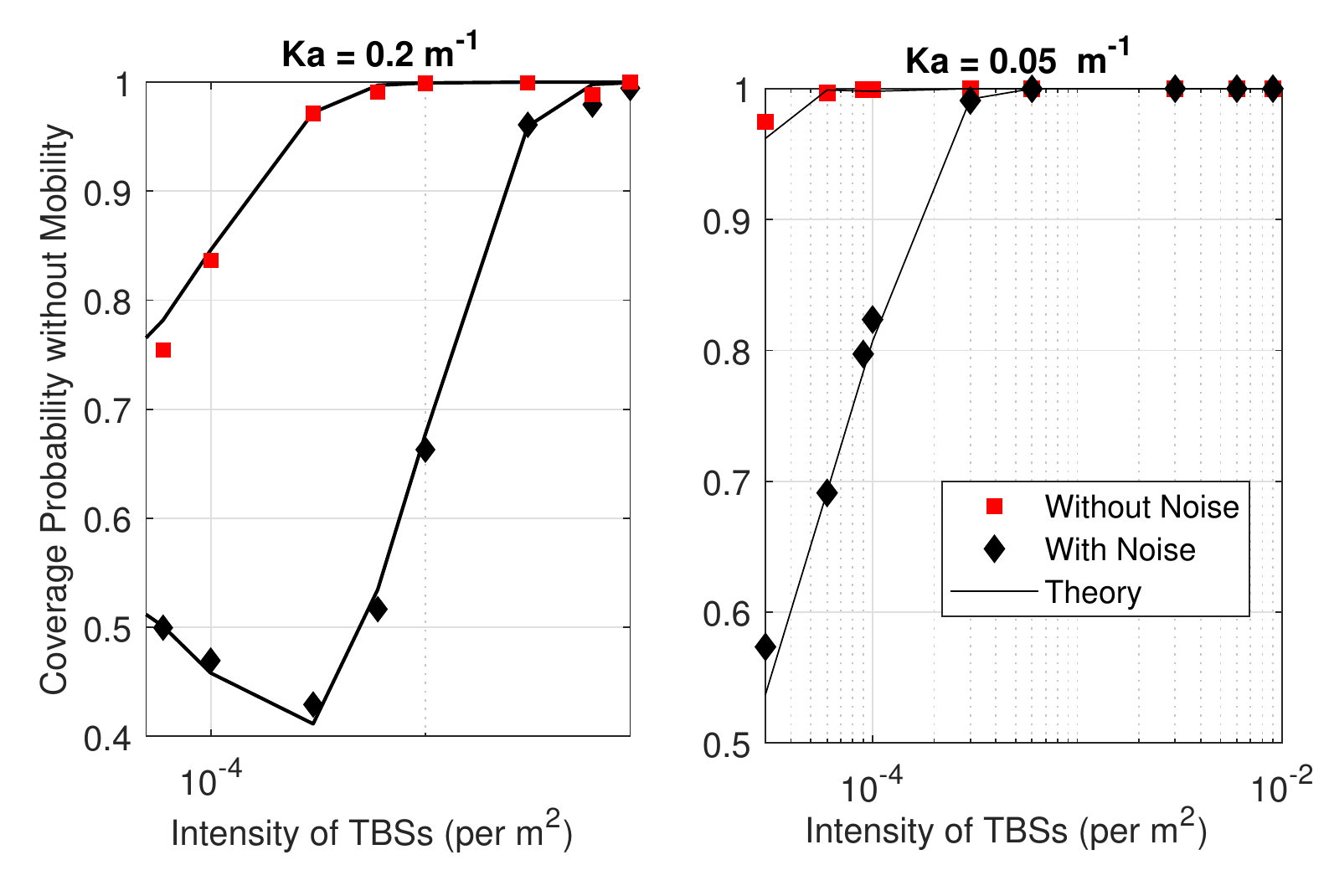}\label{fig:f6}}
  \caption{Coverage probability of a typical user as a function of the intensity of TBSs and molecular absorption in a hybrid RF-THz network, $K_a=0.05$m$^{-1}$ and  $K_a=0.01$m$^{-1}$.}
  \label{combine_CP_intensity}
\end{figure}

\begin{figure}[tbp!]
  \centering
  {\includegraphics[width=0.5\textwidth]{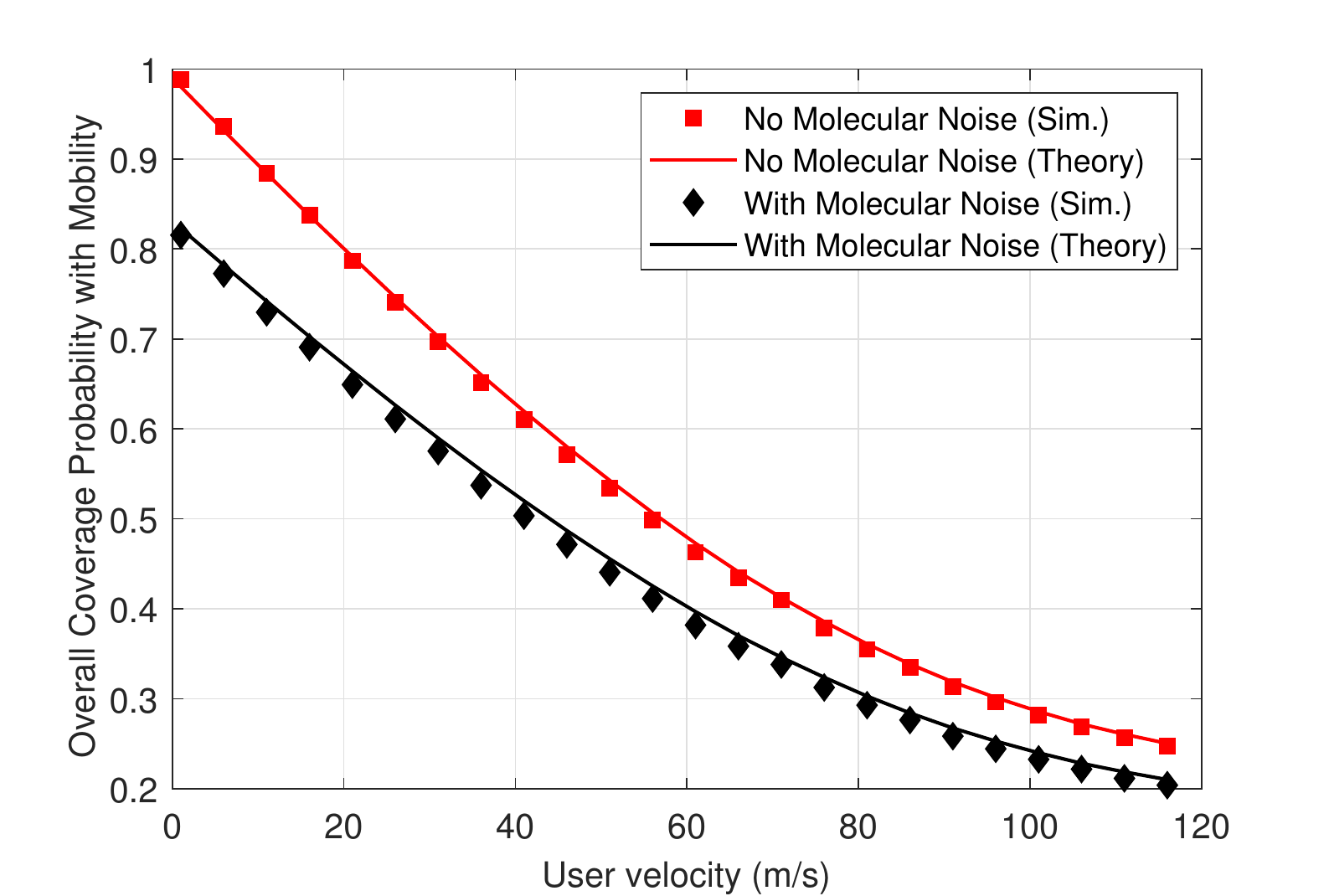}\label{fig:f5}}
  \caption{Mobility-aware coverage probability {with a relation of} a function of the user's velocity with and without molecular absorption, $\mu=0.82$, $K_a=0.05$m$^{-1}$.}
  \label{combine_CP_mobility}
\end{figure}

Fig. \ref{CPvsRate} demonstrates the impact of thermal and molecular absorption noise on the coverage probability of a static user for two different molecular absorption coefficients, $K_a$, as a function of user's desired data rate. Our analytical results match well with the simulation results.  As expected, the coverage probability decreases with the increase in the target data rate. Furthermore, the impact of molecular noise is devastating and substantiate that ignoring molecular noise from the analytical results (as is done in \cite{9119462}) can lead to over-optimistic results. Furthermore, with the increase in molecular absorption, the coverage probability degrades considerably. 

Fig. \ref{combine_CP_intensity} depicts the coverage probability with and without  molecular noise as a function of the intensity of TBSs. The intensity of RBSs kept constant, and  the user is static, however, the  coverage probability is observed for two different values of molecular absorption coefficient. The analytical results corroborate with  the simulation results.  This figure also confirms that the  molecular noise  significantly degrades the coverage probability compared to the case when there is no molecular noise \cite{9119462}.  Furthermore, in general, the increase in intensity of TBSs increases coverage due to the shortening of distance from the nearest TBS. Interestingly,  when molecular absorption is high, increasing the intensity  first deteriorates the coverage {probability} (due to increased interference); and afterwards escalate due to improved signal quality which is mainly due to shorter distance from the associated TBS.


Fig. \ref{combine_CP_mobility}, demonstrate the overall coverage probability with mobility from Eq. (\ref{overall_CP})  as a function of user's velocity. Numerical values from simulation results validate the accuracy of our theoretical results. 
This figure confirms the overall coverage probability reduces with the increase in velocity and demonstrates the gap between the results with  molecular noise and without molecular noise in \cite{9119462}.

\begin{figure}[tbp!]
  \centering
  {\includegraphics[width=0.5\textwidth]{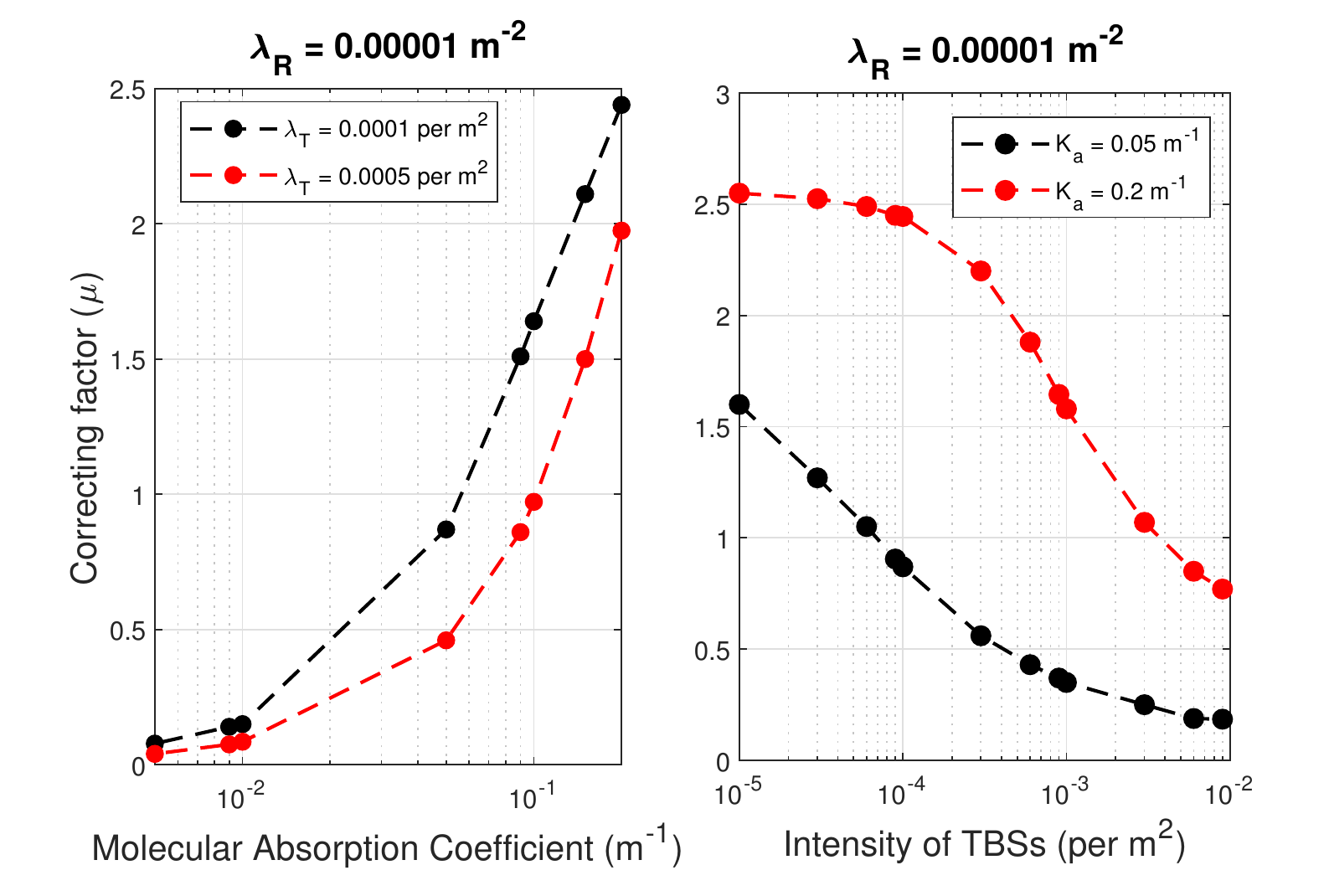}\label{fig:f9}}
  \caption{{The correction factor as a function of the intensity of {the} TBSs and molecular absorption.}}
  \label{ka_lambdaVSmu}
\end{figure}

\begin{figure}[tbp!]
  \centering
  {\includegraphics[width=0.5\textwidth]{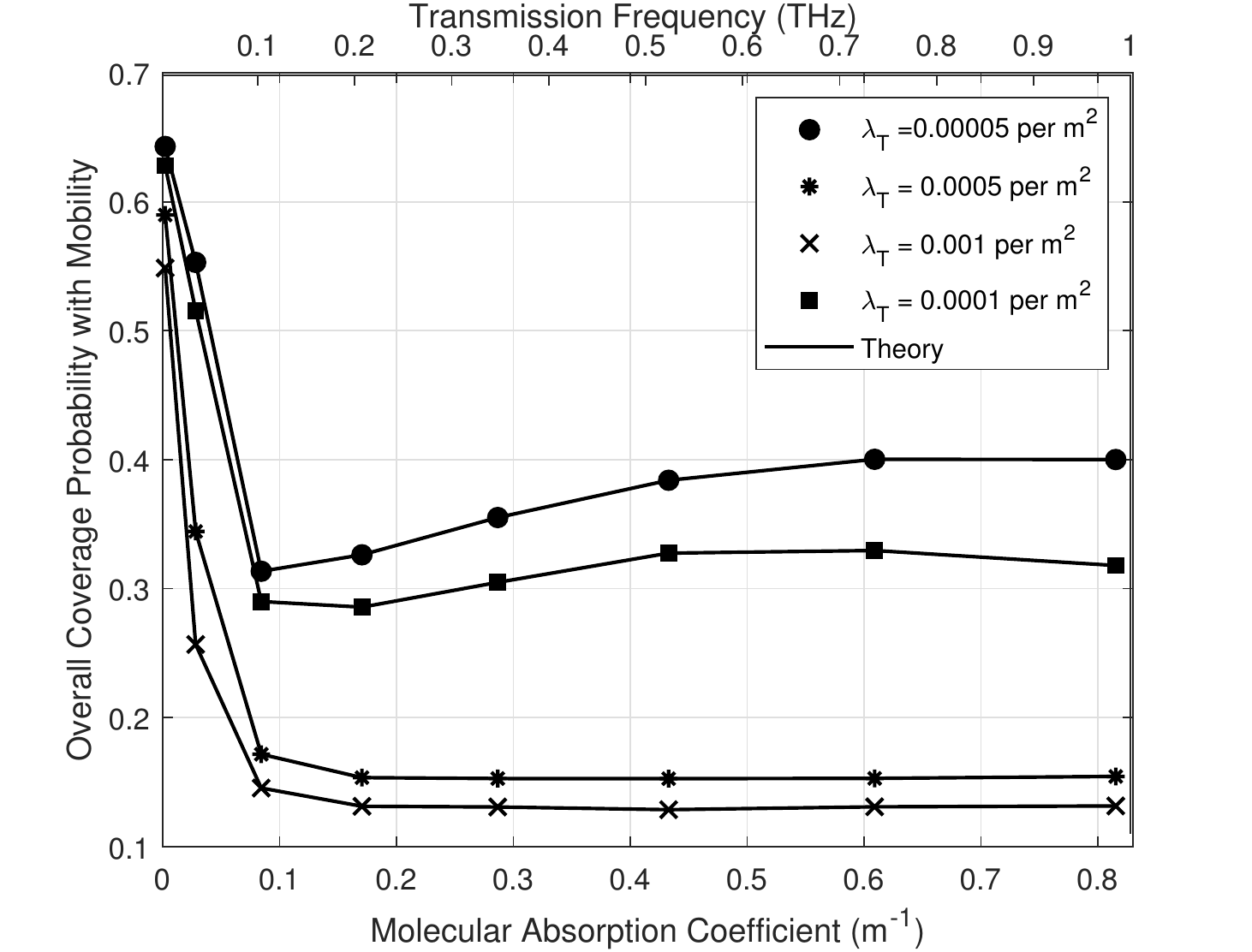}\label{fig:f9}}
  \caption{{Overall coverage probability of a typical user with mobility  as a function of the molecular absorption coefficients and frequencies considering  molecular noise and a constant user's velocity.}}
  \label{kaVScp}
\end{figure}

Fig. \ref{ka_lambdaVSmu} illustrates the correcting factor ($\mu$) as a function of the molecular absorption coefficient ($K_a$) and intensity of TBSs considering a fixed intensity of RBSs. The figure shows that the correcting factor gradually increases with the increasing value of molecular absorption coefficient. On the other hand, the figure demonstrates that the correcting factor gradually decreases with the increasing  intensities of TBSs.  

Fig.~\ref{kaVScp} depicts the overall coverage probability with mobility as a function of the molecular absorption coefficient  and intensity of TBSs. We consider the user is moving with a constant velocity (i.e., 56 m/s). The higher intensity of TBSs ($\lambda_T=0.001$ per m$^2$ and $\lambda_T=0.0005$ per m$^2$) results in a much denser THz network and the connected TBSs are likely to be much closer to the moving user. Therefore, increasing {value of} molecular absorption simply degrades the {probability of coverage}. This degradation is due to the reduction in signal strength. However, a more interesting observation can be noted from the cases when $\lambda_T=0.0001$ per m$^2$ and $\lambda_T=0.00005$ per m$^2$. That is, the overall coverage probability first reduces due to the  signal degradation as a function of the molecular absorption. However,  the coverage starts increasing again at some point and the reason is the reduction in interference with the increase of molecular absorption coefficient. Surprisingly, the benefits of interference reduction due to increasing molecular absorption dominates the drawback  of signal degradation for a reasonable intensity of TBSs. This trend is opposite to what observed for much denser THz network.

Finally, Fig.~\ref{blockage} denotes the impact of  blockage intensity $\lambda_B$ or blockage area $p$ on the coverage probability. As expected, increasing either  $\lambda_B$ or $p$ results in the coverage degradation.

\begin{figure}[tbp!]
  \centering
  {\includegraphics[width=0.5\textwidth]{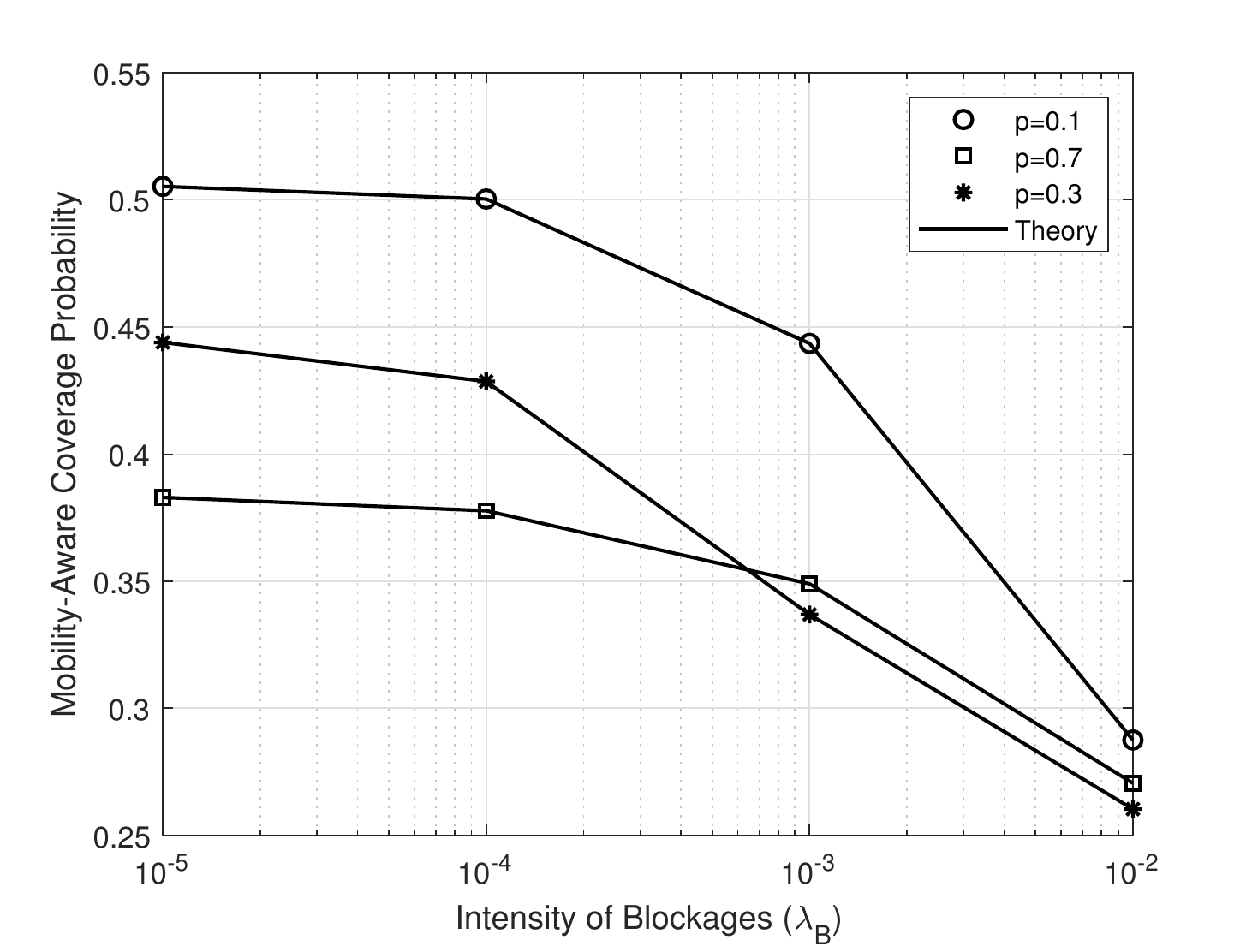}\label{fig:f9}}
  \caption{{Overall coverage probability of a typical user with mobility  as a function of the intensity of blockages.}}
  \label{blockage}
\end{figure}

\section{Conclusions and Future Work}
In this article, we provided a comprehensive stochastic geometry framework to describe the overall performance of a mobile user in a two-tier hybrid RF-THz network. We derived novel coverage probability expressions considering molecular noise in THz transmissions and derived the coverage probability with mobility. We validated the accuracy of derived expressions using Monte-Carlo simulations. 
Our numerical  results depict that the probability of HO in THz network is of much more significance than in conventional RF network, especially for lower molecular absorption coefficients. Therefore, mobility-aware performance frameworks are of immediate relevance. Also, our results demonstrated that the benefits of interference reduction due to increasing molecular absorption can dominate compared to the signal degradation at reasonable intensity of TBSs. This is favourable news for upcoming 6G networks. Furthermore, our results revealed that  ignoring molecular absorption and mobility can lead to significantly over-optimistic results, especially in high frequency  THz networks. 
\textcolor{black}{
In this paper, we considered the standard procedure of performing handoff based on the received signal power measurements from different access points. However, developing sophisticated handoff mechanisms for short-range transmissions is an interesting research topic for further investigation.}
Furthermore, to capture the distinct coverage zones, transmit powers,  deployment intensity, and channel propagation, we consider complementing THz with RF. However, the proposed framework can be extended for mm-wave based model by modifying the path-loss model of RF or THz, as needed.

\begin{appendices}
\section{Proof of Lemma~1}
\renewcommand{\theequation}{A.\arabic{equation}}
\setcounter{equation}{0}
Given the event $k=T$ is defined as the event when user is associated to TBS, then  the conditional PDF of the distance from the connected TBS can be {obtained} as follows:
\begin{equation}
\label{eq:Distance_distribution_Proof}
f_{r_{T}}\left(r_T \right) = \frac{1}{A_T}  \frac{d\mathbb{P}(d_0 > r_T, k = T)}{dd_0}.
\end{equation}
Subsequently, the joint PDF in the numerator can be derived as follows:
\begin{equation}
\label{eq:Distance_distribution_Proof2}
\begin{split}
&\mathbb{P}(d_0> r_T,  k  = {\rm T})  = \mathbb{P}( d_0 > r_T,  P_{T}^{\mathrm{rx}} > P_{R}^{\mathrm{rx}}), \\
& =   \int_{r_T}^{\infty}\mathbb{P}(P_{T}^{\mathrm{rx}} > P_{R}^{\mathrm{rx}}) f_{d_0}(d_0) d d_0, \\
& =   \int_{r_T}^{\infty}\mathbb{P}\left(P_{{T}}\gamma_{T}\frac{\exp\left(-K_{a} d_0\right)}{d_0^2}>P_{{R}}\gamma_{R} r_0^{-\alpha}\right) f_{d_0}(d_0)d d_0, \\
& \stackrel{(a)}{=} \int_{r_T}^{\infty} 2\pi \lambda_{T} d_0\:e^{\left(-\pi \lambda_T d_0^2 - \pi \lambda_R (d_0^2 Q)^{\frac{2}{\alpha}} \exp\left(\frac{2 K_a d_0}{\alpha}\right) \right)}d d_0,
\end{split}
\end{equation}
\normalsize
where (a) follows from substituting $\mathbb{P}(P_{T}^{\mathrm{rx}} > P_{R}^{\mathrm{rx}})$  in \eqref{eq:Asso_Probab_Proof}, and $f_{d_0}(d_0) = 2\pi \lambda_{T} d_0\exp\left(-\pi\lambda_{T} d_0^2\right)$. Now {the final value of $f_{r_{T}}\left(r_T\right)$ is obtained by replacing \eqref{eq:Distance_distribution_Proof2} into \eqref{eq:Distance_distribution_Proof}. }
{Similarly}, the event $k=R$ is defined as the event when user is associated to RBS, then  the conditional PDF of the distance from the tagged RBS can be derived as follows:
\begin{equation}
\label{eq:Distance_distribution_Proof0}
\begin{split}
f_{r_{R}}\left(r_R \right) = \frac{1}{A_R}  \frac{d\mathbb{P}(r_0 > r_R, k = R)}{dr_R}.
\end{split}
\end{equation}
Subsequently, the joint PDF in the numerator of \eqref{eq:Distance_distribution_Proof0} can be derived as follows:
\begin{equation}
\label{eq:Distance_distribution_Proof3}
\begin{split}
&  \mathbb{P}( r_0 > r_R, k = {\rm R}) = \mathbb{P}( r_0 > r_R, P_{R}^{\mathrm{rx}} > P_{T}^{\mathrm{rx}}),\\
 & = \int_{r_R}^{\infty}\mathbb{P}(P_{R}^{\mathrm{rx}} > P_{T}^{\mathrm{rx}}) f_{r_0}(r_0) dr_0, \\
& = \int_{r_R}^{\infty}\mathbb{P}\left( P_{{R}}\gamma_{R} r_0^{-\alpha} > P_{{T}}\gamma_{T}\frac{\exp\left(-K_{a} d_0\right)}{d_0^2} \right) f_{r_0}(r_0) dr_0, \\
& \stackrel{(a)}{\approx} \int_{r_R}^{\infty} \mathbb{P}\left( P_{{R}}\gamma_{R} r_0^{-\alpha}> P_{{T}}\gamma_{T}{d_0^{-2-\mu}} \right) f_{r_0}(r_0) dr_0, \\
& = \int_{r_R}^{\infty} 2\pi \lambda_{R} r_0 \exp\left(- \pi \lambda_{R} r_0^2 -\pi \lambda_{T} \left(\frac{ r_0^{\alpha}}{Q}\right)^{\frac{2}{2+\mu}} \right) dr_0,
\end{split}
\end{equation}
{where the approximation $r_T^2\exp\left(K_{a} r_T\right)$ with $r^{2 + \mu}$ and the efficient choice of correcting factor $\mu$ (as discussed in Section~III) help to derive the expression in step (a).} Finally, substituting \eqref{eq:Distance_distribution_Proof3} in \eqref{eq:Distance_distribution_Proof} yields $f_{r_{T}}\left(r_T\right)$ as given in {\bf Lemma~1}. Here, $f_{r_0}(r_0) = 2\pi \lambda_{R} r_0\exp\left(-\pi\lambda_{R} r_0^2\right)$ and $f_{d_0}(d_0) = 2\pi \lambda_{R} d_0\exp\left(-\pi\lambda_{R} d_0^2\right)$ .

\begin{figure*}
\centering
\includegraphics[width=6.5in]{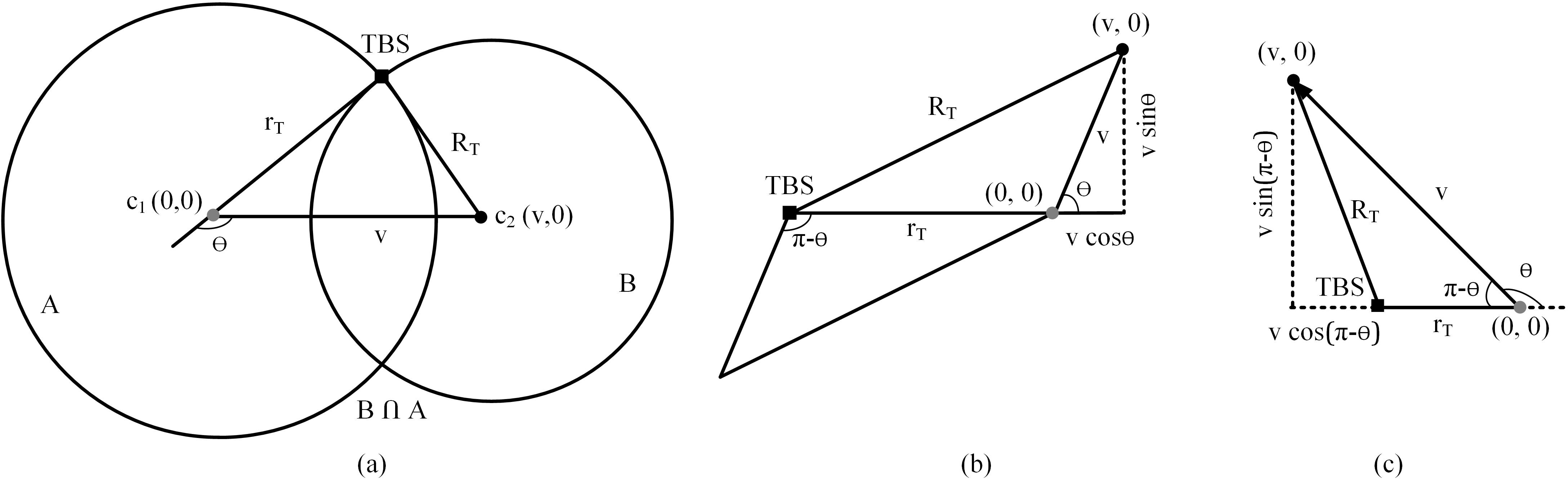}
\caption{{(a) Geometrical illustration of $r_T, R_T, v $ and $\theta$, (b) $0\leq \theta < \frac{\pi}{2}$, (c) $\mathrm{\frac{\pi}{2} \leq \theta \leq \pi}$}.}
\label{sys_mod_exp}
\end{figure*}

\section{Proof of Lemma~2}
\renewcommand{\theequation}{B.\arabic{equation}}
\setcounter{equation}{0}
It can be observed from Fig.~2(a) that the vertical HO between RF and THz tiers does not {occur} if all BSs within these two tiers except {tagged} BS (i.e., TBS) are {situated} outside the area $|B^{\prime} \setminus B^{\prime}\cap A^{\prime}|$. {When the} distance is $r_T$ and {the direction (or angle) of user movement is} $\theta$, there will be no HO from the serving TBS with probability as follows:
\begin{align} \label{t2r_nhop}
    \mathbb{P}(\overline{H}_{T}|r_{T},\theta)  =  \mathbb{P}(N(|B^{\prime} \setminus B^{\prime} \cap A^{\prime}|) = 0|T_{T} \neq T_{R})
    \nonumber\\ + \mathbb{P}(N(|B\setminus B\cap A|) = 0|T_{T} = T_{R}),
\end{align}
where $\overline{H}_{T}$ is the \textcolor{black}{complement} of $H_{T}$ and $N(\cdot)$ \textcolor{black}{presents} the number of BSs \textcolor{black}{within a particular} area. Here, $T_T$ and $T_R$ simplifies the THz and RF tier, respectively. After averaging over $r_T$ and $\theta$, $\mathbb{P}[H_{T}]$  is given in \textbf{Lemma~2}.
\textcolor{black}{The first expression in } Eq. (\ref{t2r_nhop}) \textcolor{black}{states} the vertical HO and the second \textcolor{black}{expression} \textcolor{black}{is} the horizontal HO. \textcolor{black}{Finally,} applying the null \textcolor{black}{property} of the PPP, we have:
\begin{align} \label{t2r_nhop_null}
    \mathbb{P}(\overline{H}_{T}|r_{T},\theta) & 
     = e^{\lambda_{T}  |B\setminus B\cap A| + \lambda_{R}  |B^{\prime} \setminus B^{\prime} \cap A^{\prime}|}.
\end{align}
Here, $B = \pi R_{T}^{2}$, $B^{\prime} = \pi R_{R}^{\prime 2}$. {From Fig.~\ref{sys_mod_exp}, the part of expression in Eq. (\ref{t2r_nhop_null}), $|B\cap A|$ can be easily determined as follows:}
\small
\begin{align}\label{BA}
    &|B\cap A|  =  R_{T}^{2} \mathrm{cos}^{-1} \left( \frac{R_{T}^{2}+v^2-r_{T}^{2}}{2R_{T}^{2}v} \right) + r_{T}^{2} \mathrm{cos}^{-1} \left( \frac{r_{T}^{2}+v^2-R_{T}^{2}}{2r_{T}^{2}v} \right) \nonumber\\
    & - \frac{1}{2} \sqrt{(r_{T}+R_{T}-v)(r_{T}+R_{T}+v)(v+r_{T}-R_{T})(v-r_{T}+R_{T})},\nonumber\\
    & = R_{T}^{2} \:\theta_{1}^{T} + r_{T}^{2}(\pi - \theta) - r_{T}\:v \:\mathrm{sin}{\theta}.
\end{align}
\normalsize
{Similarly, the term $|B^{\prime} \cap A^{\prime}|$ can also be calculated from Fig.~\ref{sys_mod_exp} as follows:}
\small
\begin{align} \label{BA_prime}
    &|B^{\prime} \cap A^{\prime}|  =  R_{R}^{\prime 2} \mathrm{cos}^{-1} \left( \frac{R_{R}^{\prime 2}+v^2-r_{R}^{\prime 2}}{2R_{R}^{\prime 2}v} \right) + r_{R}^{\prime 2} \theta_{2}^{T} \nonumber\\
    & - \frac{1}{2} \sqrt{(r_{R}^{\prime}+R_{R}^{\prime}-v)(r_{R}^{\prime}+R_{R}^{\prime}+v)(v+r_{R}^{\prime}-R_{R}^{\prime})(v-r_{R}^{\prime}+R_{R}^{\prime})},\nonumber\\
    & = R_{R}^{\prime 2} \theta_{3}^{T} + r_{R}^{\prime 2}(\pi - \theta_{2}^{T}) - r_{R}^{\prime} \:v\: \mathrm{sin}{\theta_{2}^{T}}.
\end{align}
\normalsize
The common area between two intersecting tiers can be calculated from Eq. (\ref{BA}) and Eq. (\ref{BA_prime}), where 
$ R_{T}^{2} = r_{T}^{2} + v^{2}-2r_{T}v\mathrm{cos}(\pi - \theta),
 R_{R}^{\prime} = \left(R_{T}\right)^{\frac{2}{\alpha}}\:e^{ \frac{K_{a}\:R_{T}}{\alpha}} \left( \frac{P_{R}^{tx} Q}{P_{T}^{tx}} \right)^{\frac{1}{\alpha}} , r_{R}^{\prime} = \left(r_{T}\right)^{\frac{2}{\alpha}}\:e^{ \frac{K_{a}\:R_{T}}{\alpha}}\left( \frac{P_{R}^{tx} Q}{P_{T}^{tx}} \right)^{\frac{1}{\alpha}}, \theta_{1}^{T} = \theta - \mathrm{sin}^{-1} \left(\frac{v\:\mathrm{sin}\theta}{R_T}\right), \theta_{2}^{T} = \mathrm{cos}^{-1} \left( \frac{r_{R}^{\prime 2} + v^2 - R_{R}^{\prime 2}}{2r_{R}^{\prime} v} \right), \theta_{3}^{T} = \theta - \mathrm{sin}^{-1} \left(\frac{v\:\mathrm{sin}\theta}{R_{R}^{\prime}} \right) $.


According to the Fig.~2(a), the common area between two intersecting circles of radii $r_T$ and $R_T$ is $S_{T}$. 
Here, the value of $\theta_1^T = \theta - \mathrm{sin}^{-1} \left( \frac{v\: \mathrm{sin}\theta}{R_T} \right)$, which is true when the value of $ \theta $ lies between 0 and $ \frac{\pi}{2} $. According to Fig.~\ref{sys_mod_exp}(c), when the value of $ \theta $ is in between $ \frac{\pi}{2} $ and $\pi$ then $r$ is no longer greater than $\left\{v\: \mathrm{ cos }(\pi - \theta)\right\}$ or in other words, $v\: \mathrm{ cos } (\pi - \theta) > r$. For $ \frac{\pi}{2} \leq \theta \leq \pi $ from \cite{tabassum2019fundamentals}, $\left\{\mathrm{sin}^{-1} \left( \frac{v\: \mathrm{sin}\theta}{R_T} \right )\right\}$ in $\theta_1^T$ will be replaced by $\left\{\pi - \mathrm{sin}^{-1} \left( \frac{v\: \mathrm{sin}\theta}{R_T} \right )\right\}$. Therefore, we define a new term  $C_T$ with modified $\theta_1^T$ to substitute the term  $S_T$ in {\bf Lemma~2}. 

 On the contrary, the common area between two intersecting circles from radii $r_{R}^{\prime}$ and $R_{R}^{\prime}$ is $S_{T}^{\prime}$. 
 For $ \frac{\pi}{2} \leq \theta_2^T \leq \pi $ from \cite{tabassum2019fundamentals}, $\left\{\mathrm{sin}^{-1} \left( \frac{v\: \mathrm{sin}\theta_2^T}{R^{\prime }_{R}} \right )\right\}$ in $\theta_3^T$ will be replaced by $\left\{\pi - \mathrm{sin}^{-1} \left( \frac{v\: \mathrm{sin}\theta_2^T}{R^{\prime}_{ R}} \right )\right\}$. Therefore, we define a new term  $C^{\prime}_{ T}$ with modified $\theta_3^T$ to substitute the term  $S^{\prime}_{T}$ in {\bf Lemma~2}.

\section{Proof of Lemma~4}
\renewcommand{\theequation}{C.\arabic{equation}}
\setcounter{equation}{0}
The vertical HO between THz and RF tiers does not \textcolor{black}{occur} if all BSs  except \textcolor{black}{tagged} RBSs are \textcolor{black}{located} outside the area $|A^{\prime} \setminus A^{\prime} \cap B^{\prime}|$. \textcolor{black}{When the} distance is $r_R$ and the direction \textcolor{black}{(or angle) of the user movement is} $\theta$, there \textcolor{black}{will be} no HO \textcolor{black}{from the serving RBS} with probability as follows:
\begin{align} \label{r2t_nhop}
    &\mathbb{P}(\overline{H}_{R}|r_{R},\theta)  =  \mathbb{P}(N(|A^{\prime} \setminus A^{\prime} \cap B^{\prime} |) = 0|T_{R} \neq T_{T})
     \\&+ \mathbb{P}(N(|A\setminus A\cap B|) = 0|T_{R} = T_{T}),
\end{align}
where $\overline{H}_{R}$ is the \textcolor{black}{complement} of $H_{R}$ and $N(\cdot)$ \textcolor{black}{depicts the number of BSs  within a particular} area. \textcolor{black}{The first expression in} Eq. (\ref{r2t_nhop}) \textcolor{black}{states} the vertical HO and the second \textcolor{black}{expression determines} the horizontal HO. \textcolor{black}{Finally,} applying the null \textcolor{black}{property} of the PPP, we have:
\begin{align} \label{r2t_nhop_null}
    \mathbb{P}(\overline{H}_{R}|r_{R},\theta) = \mathrm{exp}\left(\lambda_{R} \cdot |A\setminus A\cap B| + \lambda_{T} \cdot |A^{\prime} \setminus A^{\prime} \cap B^{\prime} |\right).
\end{align}
Here, $A = \pi R_{R}^{2}$, $A^{\prime} = \pi R_{T}^{\prime 2}$. {Likewise appendix B, it can be determined the two-parts of Eq. (\ref{r2t_nhop_null}), i.e., $|A\cap B|$ and $|A^{\prime} \cap B^{\prime}|$ can be given as in the following:}
\small
\begin{align}\label{AB}
    &|A\cap B|  = R_{R}^{2} \mathrm{cos}^{-1} \left( \frac{R_{R}^{2}+v^2-r_{R}^{2}}{2R_{R}^{2}v} \right) + r_{R}^{2} \mathrm{cos}^{-1} \left( \frac{r_{R}^{2}+v^2-R_{R}^{2}}{2r_{R}^{2}v} \right) \nonumber\\
    & - \frac{1}{2} \sqrt{(r_{R}+R_{R}-v)(r_{R}+R_{R}+v)(v+r_{R}-R_{R}) (v-r_{R}+R_{R})}, \nonumber\\
    & = R_{R}^{2} \:\theta_{1}^{R} + r_{R}^{2}(\pi - \theta) - r_{R}\:v \:\mathrm{sin}{\theta}.
\end{align}
\begin{align}  \label{AB_prime}
    &|A^{\prime} \cap B^{\prime}|  =  R_{T}^{\prime 2} \mathrm{cos}^{-1} \left( \frac{R_{T}^{\prime 2}+v^2-r_{T}^{\prime 2}}{2R_{T}^{\prime 2}v} \right) + r_{T}^{\prime 2} \theta_{2}^{R} \nonumber\\
    & - \frac{1}{2} \sqrt{(r_{T}^{\prime}+R_{T}^{\prime}-v)(r_{T}^{\prime}+R_{T}^{\prime}+v)(v+r_{T}^{\prime}-R_{T}^{\prime}) (v-r_{T}^{\prime}+R_{T}^{\prime})}, \nonumber\\
    & = R_{T}^{\prime 2} \:\theta_{3}^{R} + r_{T}^{\prime 2}(\pi - \theta_{2}^{R}) - r_{T}^{\prime}\:v \:\mathrm{sin}{\theta_{2}^{R}}.
\end{align}
\normalsize



The common area between two intersecting tiers can be calculated from Eq. (\ref{AB}) and Eq. (\ref{AB_prime}), where $R_{R}^{2} = r_{R}^{2} + v^{2}-2r_{R} \: v \: \mathrm{cos}(\pi - \theta), R_{T}^{\prime}=\left[ (R_{R})^{\alpha}\:(\frac{P_{T}^{tx}}{P_{R}^{tx}\:Q}) \right]^{\frac{1}{2+\mu}}, r_{T}^{\prime}=\left[ (r_{R})^{\alpha}\:\left(\frac{P_{T}^{tx}}{P_{R}^{tx}\:Q}\right)\right]^{\frac{1}{2+\mu}}$, 
$\theta_{1}^{R} = \theta - \mathrm{sin}^{-1} \left(\frac{v\:\mathrm{sin}\theta}{R_R} \right), \theta_{2}^{R} = \mathrm{cos}^{-1} \left( \frac{r_{T}^{\prime 2} + v^2 - R_{T}^{\prime 2}}{2r_{T}^{\prime} v} \right), \theta_{3}^{R} = \theta - \mathrm{sin}^{-1} \left(\frac{v\:\mathrm{sin}\theta}{R_{T}^{\prime}} \right) $.



According to the Fig. \ref{u3}(b), the common area between two intersecting circle with radii $r_R$ and $R_R$ is $S_{R}$. 
Here, the value of $\theta_1^R = \theta - \mathrm{sin}^{-1} \left( \frac{v\: \mathrm{sin}\theta}{R_R} \right)$, which is true when the value of $ \theta $ lies between 0 and $ \frac{\pi}{2} $. Similar way as Fig.~\ref{sys_mod_exp} (c), when $ \frac{\pi}{2} \leq \theta \leq \pi $ then, $v\: \mathrm{ cos } (\pi - \theta) > r$. For $ \frac{\pi}{2} \leq \theta \leq \pi $ from \cite{tabassum2019fundamentals}, $\theta_1^R $ becomes $\left\{\theta - \pi + \mathrm{sin}^{-1} \left( \frac{v\: \mathrm{sin}\theta}{R_R} \right )\right\}$. {Therefore, a new term $C_R$ is defined by modifying $\theta_1^R$ to substitute the term  $S_R$ in {\bf Lemma~4}.}

Likewise, the common area between two intersecting circles of radii $r_{T}^{\prime}$ and $R_{T}^{\prime}$ is $S_{R}^{\prime}$,
{and later term  $C_R^{\prime}$ with modified $\theta_3^R$ to substitute the term  $S_R^{\prime}$ in {\bf Lemma~4}.}

\end{appendices}

\bibliographystyle{IEEEtran}
\bibliography{IEEEabrv,reference}
\begin{IEEEbiography}[{\includegraphics[width=1in,height
=1.25in,clip,keepaspectratio]{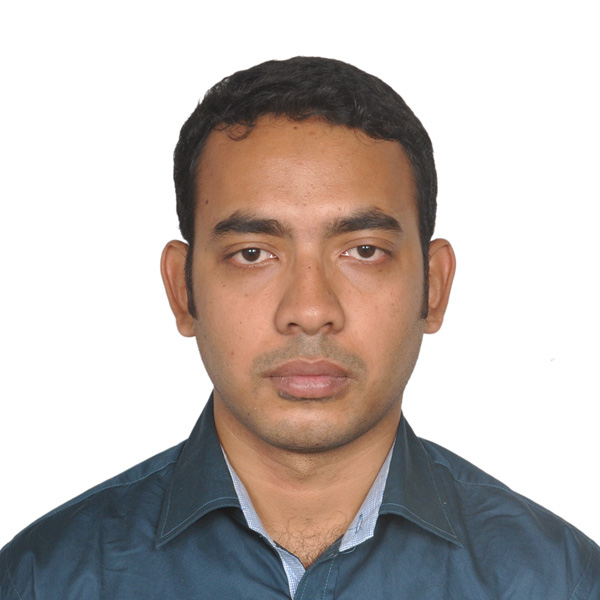}}]{Md. Tanvir Hossan}
Md Tanvir Hossan received his MASc degree in Electrical and Computer Engineering from York
University, Canada in April 2021. He also received his MSc degree in Electronics Engineering from
Kookmin University, South Korea, in 2018. His research interests include wireless communications
including space communications, vehicular communications, and machine learning.
\end{IEEEbiography}

\begin{IEEEbiography}[{\includegraphics[width=1.1in,height=1.35in]{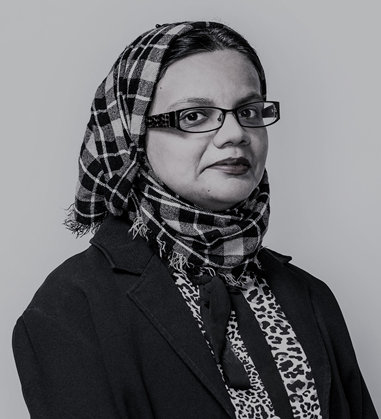}}]
{Hina Tabassum} (SM'17) is currently an Assistant Professor at the Lassonde School of Engineering, York University, Canada. Prior to that, she was a postdoctoral research associate at the Department of Electrical and Computer Engineering, University of Manitoba, Canada. She received her PhD degree from King Abdullah University of Science and Technology (KAUST). She is a Senior member of IEEE and registered Professional Engineer in the province of Ontario, Canada. She is the founding chair of a special interest group on THz communications in IEEE Communications Society (ComSoc) - Radio Communications Committee (RCC). She has been recognized as an Exemplary Reviewer (Top 2\% of all reviewers) by IEEE Transactions on Communications in 2015, 2016, 2017, 2019, and 2020. She has been recognized as an Exemplary Editor by IEEE Communications Letters, 2020. Currently, she is serving as an Associate Editor in IEEE Communications Letters, IEEE Transactions on Green Communications, IEEE Communications Surveys and Tutorials, and IEEE Open Journal of Communications Society.  Her research interests include stochastic modeling and optimization of wireless networks including vehicular, aerial, and satellite networks, millimeter and terahertz communication networks, machine learning empowered resource allocation in wireless networks.

\end{IEEEbiography}

\end{document}